\documentclass[twoside]{IEEEtran}


\usepackage{amsmath}
\usepackage{amssymb}
\usepackage{revsymb4-1}
\usepackage{amsfonts}
\usepackage{graphics}
\usepackage{pgf}
\usepackage{lipsum}
\usepackage{enumitem}

\usepackage{tikz}
\usetikzlibrary{positioning,shapes}

\newtheorem{theorem}{Theorem}

\newtheorem{definition}[theorem]{Definition}
\newtheorem{example}[theorem]{Example}
\newtheorem{proposition}[theorem]{Proposition}
\newtheorem{lemma}[theorem]{Lemma}
\newtheorem{remark}[theorem]{Remark}
\newenvironment{proof}[1][Proof]{\noindent\textbf{#1.} }{\ \rule{0.5em}{0.5em}}

\def\Tr{\operatorname{Tr}}
\def\tr{\operatorname{Tr}}
\def\id{\operatorname{id}}
\def\1{\openone}
\def\ox{\otimes}
\def\be{\begin{equation}}
\def\ee{\end{equation}}
\def\ba{\begin{eqnarray}}
\def\ea{\end{eqnarray}}

\newcommand{\ket}[1]{|#1 \rangle}
\newcommand{\bra}[1]{\langle #1|}
\newcommand{\proj}[1]{|#1 \rangle \! \langle #1 |}
\newcommand{\SWAP}{\operatorname{SWAP}}
\newcommand{\CC}{{\mathbb{C}}}

\newcommand{\EE}{{\mathbb{E}}}
\newcommand{\qed}{{\hfill\IEEEQED}}

\let\originalleft\left
\let\originalright\right
\renewcommand{\left}{\mathopen{}\mathclose\bgroup\originalleft}
\renewcommand{\right}{\aftergroup\egroup\originalright}

\usepackage{url}

\usepackage{color}


\begin{document}

\title{Quantum Channel Capacities \protect\\ with Passive Environment Assistance}

\author{Siddharth Karumanchi, \ Stefano Mancini, \thanks{S. Karumanchi and S. Mancini are with the
School of Science and Technology,
University of Camerino,
Via M. delle Carceri 9, I-62032 Camerino, Italy and
INFN--Sezione Perugia,
Via A. Pascoli, I-06123 Perugia, Italy. Email: stefano.mancini@unicam.it, siddharth.karumanchi@unicam.it.}
Andreas Winter, \thanks{A. Winter is with ICREA and
F\'{\i}sica Te\`{o}rica: Informaci\'{o} i Fen\`{o}mens Qu\`{a}ntics
Universitat Aut\`{o}noma de Barcelona,
ES-08193 Bellaterra (Barcelona), Spain. Email: andreas.winter@uab.cat.}
and Dong Yang\thanks{Dong Yang is with
F\'{\i}sica Te\`{o}rica: Informaci\'{o} i Fen\`{o}mens Qu\`{a}ntics
Universitat Aut\`{o}noma de Barcelona,
ES-08193 Bellaterra (Barcelona), Spain and
Laboratory for Quantum Information,
China Jiliang University,
Hangzhou, Zhejiang 310018, China. Email: dyang@cjlu.edu.cn.}
\thanks{Manuscript date 21 August 2014.}
}


\markboth{Karumanchi \MakeLowercase{\textit{et al.}}: Quantum Channel Capacities with Passive Environment Assistance}{Karumanchi \MakeLowercase{\textit{et al.}}: Quantum Channel Capacities with Passive Environment Assistance}

\maketitle

\begin{abstract}
  We initiate the study of \emph{passive environment-assisted}
  communication
  via a quantum channel, modeled as a unitary interaction between the
  information carrying system and an environment. In this model, the
  environment is controlled by a benevolent helper who can set its
  initial state such as to assist sender and receiver of the
  communication link. (The case of a malicious environment, also known
  as jammer, or arbitrarily varying channel, is essentially well-understood
  and comprehensively reviewed.)
  Here, after setting out precise definitions, focussing on the
  problem of quantum communication, we show that entanglement 
  plays a crucial role in this problem: indeed, the assisted 
  capacity where the helper is restricted to product states
  between channel uses is different from the one with unrestricted
  helper. Furthermore, prior shared entanglement between the
  helper and the receiver makes a difference, too.
\end{abstract}

\begin{IEEEkeywords}
  Quantum channels, quantum capacity, super-activation, entanglement.
\end{IEEEkeywords}

\IEEEpeerreviewmaketitle

\section{Introduction}
\label{sec:intro}
\IEEEPARstart{I}{n} quantum Shannon theory it is customary to model communication
channels as completely positive and trace preserving (CPTP) maps on
states; this notion contains as a special case classical channels \cite{Wilde11}. 
It is a well-known fact that each CPTP map can be decomposed into a
unitary interaction with a suitable environment system and the discarding
of that environment. This means that the noise of the channel can be
entirely attributed to losing information into the environment,
which raises the question how much better one could communicate over
the channel if one had access to the environment. Note that ``access
to the environment'' is ambiguous at this point, but that
one can distinguish at least two broad directions, one concerned
with the exploitation of the information in the environment
after the interaction and the other with the control of the state 
of the environment before the interaction -- and of course both.

The first direction has been addressed starting from Gregoratti and Werner's 
``quantum lost and found''~\cite{GW03,GW04} and focusing on the error correction ability 
of this scheme for random unitary channels \cite{BCD05} as well as for other channel
types \cite{MCM11,MMM11}.
The problem was set in an information theoretic vein 
in~\cite{HK05} and culminated in the determination of
the ``environment-assisted'' quantum capacity of an interaction
with fixed initial state of the environment, but arbitrary measurements
on the environment output fed forward to the receiver~\cite{SVW05} 
(see Fig.~\ref{fig:modelactive}).
These findings were partially
extended to the classical capacity \cite{Winter07}, which revealed an interesting connection
to data hiding and highlighted the impact of the precise restriction on
the measurements on the combined channel-output and environment-output
system. Note that, whereas the usual capacity theory for quantum channels treats the
environment as completely inaccessible, these results assume full access to the environment
and classical communication to the receiver. 
Thus, whoever controls the environment can be considered as an \emph{active helper}.

\begin{figure}[ht]
\begin{center}
\begin{tikzpicture}[scale=0.5]
\draw [ultra thick] (2,-2) rectangle (6,2);
\draw[thick,red] (0,0) -- (2,0) ;
\draw [thick,blue](6,1) -- (10,1);
\draw [ultra thick] (10,0) rectangle (11,2) node[midway]{$U_{x}$};
\draw[thick, green] (6,-1) -- (8,-1);
\draw [ultra thick] (8,-2) rectangle (9,0) node[midway]{$M$};
\draw[->, line width = 2] (9,-1) -- (10.5,-0.1);
\draw[thick,blue] (11,1) -- (12,1);
\node at (9.9,-0.9) {$x$};
\node at (7,-1.5) {$F$};
\node[left] at (0,0){$A$}; \node[right] at (12,1){$B$};
\draw (4,0) node[font = \fontsize{40}{42}\sffamily\bfseries]{$\mathcal{N}$};
\end{tikzpicture}
\end{center}
\caption{Diagrammatic view of the three parties involved in the 
   communication setting with active helper.
   In this model the helper measures the output state with a POVM 
   $(M_{x})$, $\sum_x M_{x} = \1$, and sends the classical message $x$
   to Bob, who applies a corresponding unitary $U_{x}$ to recover the initial 
   message of Alice.}
  \label{fig:modelactive}
\end{figure}
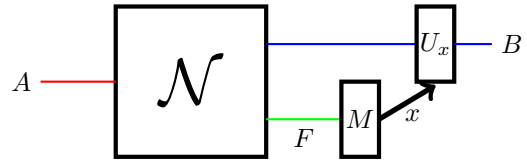

In the present paper we are concerned with the second avenue, to be precise a 
model where the communicating parties have no access to the environment-output
but instead there is a third party controlling the initial state of
the environment. The choice of initial environment state effectively
is a way of preparing a channel between Alice and Bob. Depending on the aim of
that party, we call the model communication with a \emph{passive helper}
if she is benevolent (because she only chooses the initial state and does 
not intervene otherwise), or communication in the presence of a \emph{jammer}
if he is malicious (see Fig.~\ref{fig:modelpassive}).

\begin{figure}[ht]
\begin{center}
\begin{tikzpicture}[scale=0.5]
\draw [thick, green](0,0) -- (1,0);
\draw[ultra thick] (1,-0.5) rectangle (2,0.5) node[midway]{$\eta$};
\draw[thick, green] (2,0) -- (4,0); \draw[thick, red] (2,2) -- (4,2); \draw[thick, blue] (8,1) -- (10,1);
\draw[ultra thick] (4,-1) rectangle (8,3);
\node[left] at (0,0){$H$}; \node[right] at (10,1){$B$};
\node[left] at (2,2){$A$}; \node[below] at (3,0){$E$};
\draw (6,1) node[font = \fontsize{40}{42}\sffamily\bfseries]{$\mathcal{N}$};
\end{tikzpicture}
\end{center}
  \caption{Diagrammatic view of the three parties involved in the communication 
    with a party controlling the environment input system.
    Depending on the goal of the party controlling the environment-input,
    either to assist or to obstruct the communication between the sender 
    Alice and the receiver Bob, we call it passive helper (Helen) or jammer
    (Jack), respectively.}
  \label{fig:modelpassive}
\end{figure}

In the next Section~\ref{sec:main} we shall define the model 
rigorously, as well as the different notions of assisted and 
adversarial codes and associated (quantum) capacities, and make 
initial general observations.
In Section~\ref{sec:two-qubit} we then go on to study 
two-qubit unitaries,which allow for the computation or estimation of capacities. They also show a range of general phenomena,
including super-activation of capacities that are discussed in Section ~\ref{sec:CA}.
These finding put into the focus a variation of the passive helper,
where she can use pre-shared entanglement with the receiver,
which model we explore in Section~\ref{sec:entangled}. 
We conclude in Section~\ref{sec:conclusion} with a number of
open problems and suggestions for future investigations.
Two appendices contain the technical details of the random coding 
capacity formula of the jammer model (Appendix~\ref{app:jammer}),
and the analysis of the (anti-)degradability properties of two-qubit 
unitaries (Appendix~\ref{app:two-qubit}).


\section{Assisted and adversarial capacities}
\label{sec:main}
As mentioned in the introduction we are concerned with the model of communication 
where there is a third party, other than the sender and receiver, who has access 
to the environment input system. The party's role is either to assist or hamper the 
quantum communication from Alice to Bob, which is distinguished in our nomenclature 
as Helen (helper) and Jack (jammer), respectively. 

Let $A$, $E$, $B$, $F$, etc.~be finite dimensional Hilbert spaces 
and $\mathcal{L}(X)$ denote the space of linear operators on the 
Hilbert space $X$.
Consider an isometry $V:A\ox E \hookrightarrow B\ox F$, which
defines the channel (CPTP map) 
$\mathcal{N}:\mathcal{L}(A\ox E) \rightarrow \mathcal{L}(B)$,
whose action on the input state is
\begin{equation*}
  \mathcal{N}^{AE \rightarrow B}(\rho) = \tr_F V \rho V^\dag.
\end{equation*}
The complementary channel, 
$\widetilde{\mathcal N}:\mathcal{L}(A\ox E) \rightarrow \mathcal{L}(F)$, 
is given by  
\begin{equation*}
  \widetilde{\mathcal N}^{AE \rightarrow F}(\rho) = \tr_B V \rho V^\dag.
\end{equation*}

By inputting an environment state $\eta$ on $E$, an effective channel 
$\mathcal{N}_\eta:\mathcal{L}(A) \rightarrow \mathcal{L}(B)$ is defined, via
\[
  \mathcal{N}_\eta^{A\rightarrow B}(\rho) = \mathcal{N}^{AE\rightarrow B}(\rho\ox\eta).
\]
Clearly, for channels $\mathcal{N}_i:\mathcal{L}(A_i E_i) \rightarrow \mathcal{L}(B_i)$
and states $\eta_i$,
\[
  (\mathcal{N}_1\ox\mathcal{N}_2)_{\eta_1\ox\eta_2} 
        = (\mathcal{N}_1)_{\eta_1} \ox (\mathcal{N}_2)_{\eta_2}.
\]
Note that if $\eta$ is pure, then the complementary channel is given by
\[
  \widetilde{\mathcal{N}_\eta} = (\widetilde{N})_\eta,
\]
but this is not true in general for mixed states $\eta$.

\begin{figure}[ht]
\begin{tikzpicture}[scale=0.28]
\draw [ultra thick] (5,-11) rectangle (10,-1) ;
\draw [ultra thick] (18,-11) rectangle (23,-1) ;
\draw [ultra thick] (13,-3) rectangle (15,-1) node[midway]{$\mathcal{N}$};
\draw [ultra thick] (13,-6) rectangle (15,-4) node[midway]{$\mathcal{N}$};
\draw [ultra thick] (13,-11) rectangle (15,-9) node[midway]{$\mathcal{N}$};
\draw [thick, red] (0,0)   -- (3,-6) -- (5,-6);
\draw [thick, red] (10,-1.5) -- (13,-1.5);
\draw [thick, red] (10,-4.5) -- (13,-4.5);
\draw [thick, red] (10,-9.5) -- (13,-9.5);
\draw [thick,blue] (15,-2) -- (18,-2);
\draw [thick, blue] (15,-5) -- (18,-5);
\draw [thick, blue] (15,-10) -- (18,-10);
\draw [thick, blue] (23,-6) -- (25,-6) -- (28,0);
\draw [thick, purple] (0,0) -- (3,6) -- (25,6) -- (28,0);
\draw [loosely dashed] (-0.5,0) -- (28.5,0);
\draw [thick, green] (11,-13) -- (11,-2.5) -- (13,-2.5);
\draw [thick, green] (11,-13) -- (11.2,-5.5) -- (13,-5.5);
\draw [thick, green] (11,-13) -- (11.4,-10.5) -- (13,-10.5);
\node[above] at (4,-6) {$A_{0}$}; \node[above] at (24,-6){$B_{0}$};\node[below] at (14,6){$R$};
\node[left] at (0,0){$\Phi$};
\node[right] at (28,0){$\sigma$};
\node[above] at (10.5,-1.5){$A$};\node[above] at (10.5,-4.5){$A$};\node[above] at (10.5,-9.5){$A$};
\node[below] at (12.3,-2.5){$E$};\node[below] at (12.3,-5.5){$E$};\node[below] at (12.3,-10.5){$E$};\node[above] at (17,-2){$B$};\node[above] at (17,-5){$B$};\node[above] at (17,-10){$B$};
\node[below] at (11,-13){$\eta$};
\draw (7.5,-6) node[font = \fontsize{40}{42}\sffamily\bfseries]{$\mathcal{E}$};
\draw (20.5,-6) node[font = \fontsize{40}{42}\sffamily\bfseries]{$\mathcal{D}$};
\draw [thick,dotted] (10.5,-6.5) -- (10.5,-7.5);
\draw[thick,dotted] (12.3,-7.5) -- (12.3,-8.5);
\draw[thick,dotted] (14,-7) -- (14,-8);
\draw[thick,dotted] (17,-7) -- (17,-8);
\end{tikzpicture}
\caption{Schematic of a general protocol to transmit quantum 
    information with passive assistance from the environment; $\mathcal{E}$ 
    and $\mathcal{D}$ are the encoding and decoding maps respectively, 
    the initial state of the environment is $\eta$.}
\label{fig:infotask}
\end{figure}

Referring to Fig.~\ref{fig:infotask}, to send information down this channel from
Alice to Bob, we furthermore need an encoding CPTP map 
$\mathcal{E}:\mathcal{L}(A_{0}) \rightarrow \mathcal{L}(A^n)$
and a decoding CPTP map $\mathcal{D}:\mathcal{L}(B^n) \rightarrow \mathcal{L}(B_0)$,
where the dimension of $A_0$ is equal to the dimension of $B_0$. 
The output after the overall dynamics, when we input a maximally entangled
test state $\Phi^{RA_0}$, with $R$ being the inaccessible reference system, is 
$\sigma^{RB_0} 
  = \mathcal{D}\bigl(\mathcal{N^{\ox}}\bigl(\mathcal{E}(\Phi^{RA_0}) \ox \eta^{E^n}\bigr)\bigr)$.

\medskip
\begin{definition} 
  \label{def:passive-assisted-code}
  A \emph{passive environment-assisted quantum code} of block length $n$ is a triple 
  $(\mathcal{E}^{A_0 \rightarrow A^n},\eta^{E^n},\mathcal{D}^{B^n \rightarrow B_0})$. 
  Its \emph{fidelity} is given by $F = \Tr \Phi^{RA_0} \sigma^{RB_0}$,
  and its \emph{rate} $\frac{1}{n}\log|A_0|$. 
  
  A rate $R$ is called \emph{achievable} if there are codes of all block lengths
  $n$ with fidelity converging to $1$ and rate converging to $R$. The 
  \emph{passive environment-assisted quantum capacity} of $V$, denoted
  $Q_H(V)$, or equivalently $Q_H(\mathcal{N})$, is the maximum achievable rate.
  
  If the helper is restricted to fully separable states $\eta^{E^n}$, i.e.~convex
  combinations of tensor products $\eta^{E^n} = \eta_1^{E_1} \ox \cdots \ox \eta_n^{E_n}$,
  the largest achievable rate is denoted $Q_{H\ox}(V) = Q_{H\ox}(\mathcal{N})$.
\end{definition}

\medskip
A very similar model, however with the aim of maximizing the ``transfer fidelity''
(averaged over all pure states of $A$), was considered recently by 
Liu \emph{et al.}~\cite{LGZ13}. Although the figure of merit is different,
the objective of that paper is, like ours, a quantitative index for the 
transmission power of a bipartite unitary, assisted by a benevolent helper.

As the fidelity is linear in the environment state $\eta$, without loss of
generality $\eta$ may be assumed to be pure, both for the unrestricted and
separable helper. We shall assume this from now on always in the helper
scenario, without necessarily specifying it each time.

\medskip
\begin{remark}
  Our model, since it allows for an isometry $V$, includes the plain 
  Stinespring dilation $V: A \hookrightarrow B\ox F$ of a quantum channel
  (CPTP map) $\mathcal{N}:\mathcal{L}(A) \rightarrow \mathcal{L}(B)$,
  for trivial ($1$-dimensional) $E=\CC$ so that the helper doesn't really 
  have any choice of initial state. In this case the quantum capacity is
  well-understood thanks to the works of Schumacher, Lloyd, Shor and
  Devetak. The fundamental quantity is the 
  \emph{coherent information}~\cite{Schu96,NS96,BNS98}, see also~\cite{Wilde11}
  \[
    I(A\rangle B)_\sigma := S(\sigma^B)-S(\sigma^{AB}) = -S(A|B)_\sigma,
  \]
  which needs to be evaluated for states 
  $\sigma^{AB} = (\id\ox\mathcal{N}^{A'\rightarrow B})\phi^{AA'}$, where
  $\phi^{AA'}$ is a purification of a generic density matrix $\rho^A$:
  \[
    I_c(\rho;\mathcal{N}) := I(A\rangle B)_{(\id\ox\mathcal{N})\phi}
                           = S(\mathcal{N}(\rho)) - S(\widetilde{\mathcal{N}}(\rho)).
  \]
  Then~\cite{Schu96,NS96,BNS98,Lloyd96,Shor00,Devetak03},
  \[
    Q(\mathcal{N}) = \sup_n \max_{\rho^{(n)}} \frac1n I_c\bigl(\rho^{(n)};\mathcal{N}^{\ox n}\bigr),
  \]
  where the maximum is over all states $\rho^{(n)}$ on $A^n$. It is known that
  the supremum over $n$ (the ``regularization'') is necessary~\cite{SS96,DSS98},
  except for some special channels -- see below.
  
  On the other hand, the helper has the largest range of options to assist 
  if $V$ is a \emph{unitary}. This will be the case that shall occupy us 
  most in the sequel. However, in any case, we assume that the input to
  $V$ is a product state between Alice and Helen, since they have to
  act independently, albeit in coordination.
\end{remark}

\medskip
Before we continue with our development of the theory of passive
environment-assisted capacities, we pause for a moment to reflect on
the role of the environment. While our above definitions model a
benevolent agent controlling the environment input, one may ask
what results if instead he is \emph{malevolent}, i.e.~trying to jam
the communication between Alice and Bob. This is captured by the following 
definition:

\medskip
\begin{definition}
  A \emph{quantum code} of block length $n$ for the jammer channel 
  $\mathcal{N}^{AE\rightarrow B}$ is a pair 
  $(\mathcal{E}^{A_0 \rightarrow A^n},\mathcal{D}^{B^n \rightarrow B_0})$,
  with two spaces $A_0$ and $B_0$ of the same dimension.
  Its rate is, as before, $\frac{1}{n} \log|A_0|$, while the fidelity is given by
  \[
    F := \min_{\eta^{E^n}} \Tr \Phi^{RA_0} \sigma^{RB_0},
  \]
  where $\eta^{E^n}$ ranges over all states on $E^n$, and
  $\sigma^{RB_0} 
    = \mathcal{D}\bigl(\mathcal{N^{\ox}}\bigl(\mathcal{E}(\Phi^{RA_0}) \ox \eta^{E^n}\bigr)\bigr)$, 
  with a maximally entangled state $\Phi^{RA_0}$.
  
  A \emph{random quantum code} is given by an ensemble of codes
  $(\mathcal{E}_\lambda^{A_0 \rightarrow A^n},\mathcal{D}_\lambda^{B^n \rightarrow B_0})$
  with a random variable $\lambda$. The rate is as before, and the fidelity
  \[
    \overline{F} := \min_{\eta^{E^n}} \EE_\lambda \Tr \Phi^{RA_0} \sigma_\lambda^{RB_0},
  \]
  where now 
  $\sigma_\lambda^{RB_0} 
   = \mathcal{D}_\lambda\bigl(\mathcal{N^{\ox}}\bigl(\mathcal{E}_\lambda(\Phi^{RA_0}) 
                                                              \ox \eta^{E^n}\bigr)\bigr)$.
  
  The corresponding \emph{adversarial quantum capacities}, 
  to emphasize the presence of the jammer, 
  are denoted $Q_J(\mathcal{N})$ and $Q_{J,r}(\mathcal{N})$, respectively.
\end{definition}

\medskip
\begin{remark}
  The special case where the jammer controls a classical input $E$, i.e.~there
  is an orthonormal basis $\{\ket{s}\}$ of $E$ such that
  \[
    \mathcal{N}(\rho \ox \ket{s}\!\bra{t}) = \delta_{st}\,\mathcal{N}_s(\rho),
  \]
  has been introduced and studied in-depth by Ahlswede \emph{et al.}~\cite{ABBN}
  under the name of \emph{arbitrarily varying quantum channel (AVQC)}. In
  other words, there the communicating parties are controlling genuine quantum
  systems (naturally, as they are supposed to transmit quantum information),
  whereas the jammer effectively only has a classical choice $s$. 
  
  Our model here lifts this restriction and generalizes the AVQC to a fully 
  quantum jammer channel. 
  This has the very important consequence that the jammer now can
  choose to prepare channels for Alice and Bob that are not tensor products
  of $n$ single-system channels, or convex combinations thereof, but have
  other, more subtle noise correlations between the $n$ systems.
\end{remark}

\medskip
It turns out that the worst behaviour of the jammer, at least in the
random code case, is to choose one, pessimal, environment input to $\mathcal{N}$
and use it in all $n$ instances. The following theorem is proved in
Appendix~\ref{app:jammer}.

\medskip
\begin{theorem}
  \label{thm:QJ}
  For any jammer channel $\mathcal{N}^{AE \rightarrow B}$, 
  \[
    Q_{J,r}(\mathcal{N}) 
      = \sup_n \max_{\rho^{(n)}} \min_{\eta} \frac1n I_c\bigl(\rho^{(n)};(\mathcal{N}_\eta)^{\ox n}\bigr),
  \]
  where the maximization is over states $\rho^{(n)}$ on $A^n$, and the minimization
  is over arbitrary (mixed) states $\eta$ on $E$.
\end{theorem}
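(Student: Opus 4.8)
The plan is to establish the two matching inequalities $Q_{J,r}(\mathcal{N})\le\text{RHS}$ (converse) and $Q_{J,r}(\mathcal{N})\ge\text{RHS}$ (achievability), where RHS denotes $\sup_n\max_{\rho^{(n)}}\min_{\eta}\frac1n I_c\bigl(\rho^{(n)};(\mathcal{N}_\eta)^{\otimes n}\bigr)$. The converse is the more routine direction; the achievability, and in particular the passage from i.i.d.\ jammer inputs to arbitrary inputs that are correlated or entangled across the $n$ uses, is where the real work lies.

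For the converse I would exploit that the jammer is always free to adopt the i.i.d.\ strategy $\eta^{\otimes n}$ for any single-letter $\eta$, against which the effective channel is the memoryless $(\mathcal{N}_\eta)^{\otimes n}$. Given a random code of fidelity $\overline F\ge 1-\epsilon$, I would first note that entanglement fidelity is linear in the channel, so the shared-randomness-averaged map $\mathcal{T}_\eta=\EE_\lambda\,\mathcal{D}_\lambda\circ(\mathcal{N}_\eta)^{\otimes n}\circ\mathcal{E}_\lambda$ has fidelity $\ge 1-\epsilon$ to the identity for every $\eta$. Treating the shared randomness as a classical register passed intact to the decoder and applying the Lloyd--Shor--Devetak converse together with the Alicki--Fannes continuity bound, I would obtain $\log|A_0|\le\EE_\lambda I_c(\rho_\lambda;(\mathcal{N}_\eta)^{\otimes n})+o(n)$, where $\rho_\lambda$ is the average input of code $\lambda$ and the $H(\lambda)$ contributions of the classical register cancel. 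Since this bound holds for all $\eta$ simultaneously, it holds at the minimizing $\eta$. The one subtle point is that this yields $\min_\eta\EE_\lambda$ rather than the stated $\max_\rho\min_\eta$: but additivity of coherent information for product inputs to the product channel $(\mathcal{N}_\eta)^{\otimes n}$ lets a suitable block input realize the mixture $\EE_\lambda\rho_\lambda$, so $\frac1n\min_\eta\EE_\lambda I_c(\rho_\lambda;(\mathcal{N}_\eta)^{\otimes n})$ is itself dominated by RHS after regularization.

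For achievability I would fix a block length $n_0$ and an input $\rho^{(n_0)}$ attaining (nearly) the supremum, and run the standard LSD random quantum code at length $N=mn_0$ with block input close to $(\rho^{(n_0)})^{\otimes m}$. For each fixed single-letter $\eta$ the channel is memoryless, so over the random ensemble the decoding error against $\eta^{\otimes N}$ is exponentially small in $m$ while the achieved rate approaches $\frac1{n_0}I_c(\rho^{(n_0)};(\mathcal{N}_\eta)^{\otimes n_0})\ge\min_\eta(\cdots)$; compactness of the set of environment states then lets me cover all $\eta$ at once by a finite net and a union bound, still with exponentially small error.

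The main obstacle is upgrading this i.i.d.\ guarantee to one valid against every correlated, entangled jammer state $\eta^{E^N}$, which is the genuinely new quantum feature absent from the classical-input AVQC. Here I would symmetrize the code by a uniformly random permutation of the $N$ channel uses (Ahlswede's robustification, as in the AVQC analysis of \cite{ABBN}), rendering the protocol permutation covariant; since the fidelity is linear in $\eta^{E^N}$, the code then effectively only ever sees the permutation-averaged, hence symmetric, jammer state. Invoking the quantum de Finetti theorem in its post-selection form, the worst-case error over all $\eta^{E^N}$ is bounded by a factor of order $(N+1)^{|E|^2-1}$ times the error on the de Finetti mixture $\int\sigma^{\otimes N}\,d\sigma$ of i.i.d.\ states, which is at most the uniform i.i.d.\ error already shown to be exponentially small. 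The polynomial prefactor is swamped by the exponential decay, so the symmetrized random code achieves the target rate against an arbitrary jammer, matching the converse. The delicate step throughout is exactly this reduction of memory-correlated quantum jamming to the single-letter minimization over $\eta$, which is the substance of the theorem.
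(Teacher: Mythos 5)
Your proposal is correct and follows essentially the same route as the paper: the achievability part coincides with the paper's argument --- universal codes for the compound channel $\bigl\{(\mathcal{N}_\eta)^{\otimes n}\bigr\}_\eta$ (the paper cites the Bjelakovi\'{c}--Boche--N\"otzel compound-channel theorem with exponentially small error $c^n$ instead of constructing them via an $\eta$-net and union bound), then shared-random permutation symmetrization, linearity of the fidelity in the jammer's state $\eta^{E^n}$, and the Christandl--K\"onig--Renner postselection bound with the polynomial factor $(n+1)^{|E|^2}$ swallowed by the exponential decay. The only divergence is in the converse, where the paper simply invokes the AVQC converse of Ahlswede \emph{et al.}\ (noting the jammer may restrict to i.i.d.\ strategies) while you sketch it directly; your device of passing from $\min_\eta \EE_\lambda I_c$ to the regularized max-min via additivity of coherent information on product inputs to product channels is sound, modulo a routine discretization of the shared-randomness variable $\lambda$.
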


\medskip
See~\cite{ABBN} and~\cite{BN13} for a detailed discussion of the role
of shared randomness in the theory of the AVQC model; these authors
suggest that $Q_J = Q_{J,r}$ for all jammer channels, at least for all AVQCs, 
which however should be contrasted with the findings of~\cite{BN14} that there
are AVQCs for which the \emph{classical} capacity assisted by shared randomness
is positive while without that resource it is zero.

\bigskip
Let us now resume our discussion of environment-assisted quantum capacity,
deriving capacity theorems analogous to the one above for the jammer model. 
For the latter we saw that (mixed) product states are asymptotically optimal for
the jammer. It will turn out that restricting the helper to product (separable) 
states can be to severe disadvantage; while from the definitions, for any 
isometry $V$ we have $Q_{H\ox}(V) \leq Q_{H}(V)$, the inequality can be strict.

\medskip
\begin{theorem}
  \label{QH+QHtens}
  For an isometry $V:AE \longrightarrow BF$, the passive environment-assisted
  quantum capacity is given by 
  \begin{equation}\begin{split}
    \label{eq:QH}
    Q_{H}(V) &= \sup_{n} \max_{\eta^{(n)}} \frac1n Q(\mathcal{N}^{\ox n}_{\eta^{(n)}}) \\
             &= \sup_{n} \max_{\rho^{(n)},\eta^{(n)}}
                   \frac{1}{n} I_c\bigl( \rho^{(n)};\mathcal{N}^{\ox n}_{\eta^{(n)}} \bigr),
  \end{split}\end{equation}
  where the maximization is over states $\rho^{(n)}$ on $A^n$ and \emph{pure}
  environment input states $\eta^{(n)}$ on $E^n$.
  
  Similarly, the capacity with separable helper is given by the same formula,
  \begin{equation}\begin{split}
    \label{eq:QHtens}
    Q_{H\ox}(V) &= \sup_{n} \max_{\eta^{(n)}=\eta_1\ox\cdots\ox\eta_n}
                     \frac1n Q(\mathcal{N}_{\eta_1}\ox\cdots\ox\mathcal{N}_{\eta_n})   \\
                &= \sup_{n} \max_{\rho^{(n)},\eta^{(n)}}
                     \frac{1}{n} I_c\bigl( \rho^{(n)};\mathcal{N}^{\ox n}_{\eta^{(n)}} \bigr),
  \end{split}\end{equation}
  but now varying only over (pure) product states, 
  i.e.~$\eta^{(n)} = \eta_1 \ox \cdots \ox \eta_n$.
  
  As a consequence, $Q_H(V) = \lim_{n\rightarrow\infty} \frac1n Q_{H\ox}(V^{\ox n})$.
\end{theorem}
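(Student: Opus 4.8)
The plan is to establish both displayed capacity formulas \eqref{eq:QH} and \eqref{eq:QHtens}, and then read off the corollary. The core of the argument is a reduction: for a \emph{fixed} pure environment input $\eta^{(n)}$ on $E^n$, the remaining communication problem for the parties is exactly the standard quantum capacity problem for the effective channel $\mathcal{N}^{\ox n}_{\eta^{(n)}}:\mathcal{L}(A^n)\to\mathcal{L}(B^n)$, because once $\eta$ is pinned down the environment plays no further role and Alice and Bob are left with an ordinary CPTP map to code over. So I would first argue that Helen's choice of $\eta^{(n)}$ on one hand, and the coding effort of Alice and Bob on the other, can be separated, and that the best rate is obtained by optimizing over $\eta^{(n)}$ the best coding rate against the induced channel.

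Concretely, for the achievability (``$\geq$'') direction of \eqref{eq:QH}, I would fix a block length $m$ and a pure $\eta^{(m)}$, invoke the LSD theorem (quoted in the Remark) applied to the channel $\mathcal{N}^{\ox m}_{\eta^{(m)}}$ viewed as a single channel, and use its coherent-information coding theorem to transmit at rate $\frac{1}{m}Q(\mathcal{N}^{\ox m}_{\eta^{(m)}})$: concatenating $k$ copies of $\eta^{(m)}$ as the environment input over $n=mk$ uses gives a legitimate passive-assisted code whose rate approaches this value as $k\to\infty$. Taking the supremum over $m$ and the maximum over pure $\eta^{(m)}$ yields ``$\geq$''. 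For the converse (``$\leq$''), I would take any sequence of passive-assisted codes of rate $R$ with fidelity $\to 1$; each code comes with a fixed pure $\eta^{(n)}$ (purity without loss of generality is already noted in the text), and against that fixed $\eta^{(n)}$ the code is an ordinary quantum code for $\mathcal{N}^{\ox n}_{\eta^{(n)}}$, so the standard converse bounds its rate by $\frac{1}{n}Q(\mathcal{N}^{\ox n}_{\eta^{(n)}})\le\sup_m\max_{\eta^{(m)}}\frac1m Q(\mathcal{N}^{\ox m}_{\eta^{(m)}})$, giving ``$\leq$''. The second line of \eqref{eq:QH} then follows by substituting the LSD regularized coherent-information expression for each $Q(\mathcal{N}^{\ox m}_{\eta^{(m)}})$ and merging the two suprema over block length and the two maximizations over $\rho$ and $\eta$; I should check that the nested ``$\sup_n$ of a regularized quantity'' collapses correctly to a single $\sup_n\max_{\rho^{(n)},\eta^{(n)}}$, which is a routine but necessary bookkeeping step relying on the superadditivity of coherent information under tensor products of channels.

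The separable-helper formula \eqref{eq:QHtens} is proved by the same reduction, verbatim, with the single change that $\eta^{(n)}$ is now constrained to pure product form $\eta_1\ox\cdots\ox\eta_n$; the identity $(\mathcal{N}_1\ox\mathcal{N}_2)_{\eta_1\ox\eta_2}=(\mathcal{N}_1)_{\eta_1}\ox(\mathcal{N}_2)_{\eta_2}$ stated earlier guarantees that the induced channel factorizes as $\mathcal{N}_{\eta_1}\ox\cdots\ox\mathcal{N}_{\eta_n}$, matching the first line of \eqref{eq:QHtens}. Finally, for the consequence $Q_H(V)=\lim_{n\to\infty}\frac1n Q_{H\ox}(V^{\ox n})$, I would note that an environment input on $(E^{\ox n})$ for the channel $V^{\ox n}$ is the same object as an environment input on $E^{mn}$ for $V$; an arbitrary (entangled) $\eta$ over $n$ uses of $V$ is a \emph{product} input — of a single block — for $V^{\ox n}$, so the entangled-helper optimization for $V$ is recovered as the product-helper optimization for the blocked channel $V^{\ox n}$ in the limit $n\to\infty$, and the Fekete-type limit exists by superadditivity of $Q_{H\ox}(V^{\ox n})$ in $n$.

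The main obstacle I anticipate is the converse for \eqref{eq:QH}: I must ensure that fixing $\eta^{(n)}$ to be the code's prescribed (pure) environment state is \emph{legitimate} for the converse, i.e.\ that the passive-assisted fidelity of the code equals the ordinary quantum fidelity of the induced code for $\mathcal{N}^{\ox n}_{\eta^{(n)}}$ — this is immediate from the definition of the effective channel, but I should state it cleanly. The more delicate point is the collapse of the double supremum/regularization in passing to the coherent-information form: because $\mathcal{N}^{\ox m}_{\eta^{(m)}}$ already carries its own regularization over sub-blocks, I must verify that combining this with the outer $\sup_m$ does not lose or gain rate, which is where superadditivity of $I_c$ under tensor products must be used carefully rather than waved at.
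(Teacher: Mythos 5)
Your proposal follows essentially the same route as the paper: achievability by applying the LSD theorem to the induced block channel $\mathcal{N}^{\ox m}_{\eta^{(m)}}$ coded over asymptotically many copies, and a converse that fixes the code's (w.l.o.g.\ pure) $\eta^{(n)}$ and reduces to the standard quantum-capacity converse, with the same bookkeeping to collapse the double regularization and the same blocking argument for $Q_H(V)=\lim_n \frac1n Q_{H\ox}(V^{\ox n})$. The one point to tighten is exactly the caveat you flag: a single finite-blocklength code of fidelity $F<1$ is not literally rate-bounded by $\frac1n Q(\mathcal{N}^{\ox n}_{\eta^{(n)}})$, so the converse must be run at the finite-$n$ level as in the paper --- the Schumacher--Nielsen--Barnum argument via Fannes' inequality, data processing and convexity of $I_c$, yielding $(1-2\epsilon)\log|A_0| - 1 \leq \max_{\rho^{(n)}} I_c\bigl(\rho^{(n)};\mathcal{N}^{\ox n}_{\eta^{(n)}}\bigr)$, after which taking $F\to 1$ gives the claimed bound.
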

\begin{proof}
  The direct parts, i.e.~the ``$\geq$'' inequality, follows directly from the 
  Lloyd-Shor-Devetak (LSD) theorem~\cite{Lloyd96,Shor00,Devetak03}, applied to the channel
  $(\mathcal{N}^{\ox n})_{\eta^{(n)}}$, to be precise asymptotically many
  copies of this block-channel, so that the i.i.d.~theorems apply (cf.~\cite{Wilde11}).

  For the converse (i.e.~``$\leq$''), we apply directly the argument of
  Schumacher, Nielsen and Barnum~\cite{Schu96,NS96,BNS98}: Consider a code
  of block length $n$ and fidelity $F$, where the helper uses an environment
  state $\eta^{(n)}$; otherwise we use notation as in Fig.~\ref{fig:infotask}.
  Then, first of all, $\frac12 \|\sigma-\Phi\|_1 \leq \sqrt{1-F} =: \epsilon$,
  cf.~\cite{Fuchs-vandeGraaf}. Now, Fannes' inequality~\cite{Fannes} 
  can be applied, at least once $2\epsilon \leq \frac1e$ 
  (i.e.~when $F$ is large enough), yielding
  \[\begin{split}
    I(R\rangle B_0)_\sigma &=    S(\sigma^{B_0}) - S(\sigma^{RB_0}) \\
                           &\geq S(\sigma^{B_0}) \\
                           &\geq S(\Phi^{A_0}) - 2\epsilon \log|B_0| - H_2(2\epsilon) \\
                           &\geq (1-2\epsilon) \log|A_0| - 1.
  \end{split}\]
  On the other hand, with $\omega = (\id\ox\mathcal{E})\Phi$,
  \[\begin{split}
    I(R\rangle B_0)_\sigma 
             &\leq I(R\rangle B^n)_{(\id\ox\mathcal{N}^{\ox n}_{\eta^{(n)}})\omega} \\
             &\leq \max_{\ket{\phi}^{RA^n}} I(R\rangle B^n)_{(\id\ox\mathcal{N}^{\ox n}_{\eta^{(n)}})\phi} \\
             &=    \max_{\rho^{(n)}} I_c\bigl( \rho^{(n)};(\mathcal{N}^{\ox n})_{\eta^{(n)}} \bigr),
  \end{split}\]
  using first data processing of the coherent information and then its
  convexity in the state~\cite{NS96}.
  As $n \rightarrow \infty$ and $F\rightarrow 1$, the upper bound on the rate
  follows -- depending on $Q_H$ or $Q_{H\ox}$, without or with restrictions
  on $\eta^{(n)}$.
\end{proof}

\medskip
\begin{remark}
  The channels $\mathcal{N}:\mathcal{L}(A\ox E) \longrightarrow \mathcal{L}(B)$
  can equivalently be seen as (two-sender-one-receiver) \emph{quantum multi-access
  channels}. These channels were introduced and studied in~\cite{W01,YHD08}
  under the aspect of characterizing their \emph{capacity region} of all
  pairs or rates $(R_A,R_E)$ at which the users, Alice and Helen, controlling the two
  input registers can communicate with Bob. In fact, while in~\cite{W01} only
  special channels and classical communication were considered, Ref.~\cite{YHD08}
  extended this to general CPTP maps and the consideration of quantum
  communication.
  
  Clearly, knowing the capacity region for some $\mathcal{N}^{AE\rightarrow B}$
  implies the environment-assisted capacity:
  \[
    Q_H(\mathcal{N}) = \max \{ R : (R,0) \in \text{ capacity region} \}.
  \]
  Unfortunately, however, in general only a regularized capacity formula
  is available, much like our Theorem~\ref{QH+QHtens}. Thus, the general
  multi-access viewpoint does not seem to help particularly with the computation
  of $Q_H$ or $Q_{H\ox}$.
\end{remark}

\medskip
\begin{proposition}
  \label{prop:asymp-continuity}
  The capacities $Q_H$, $Q_{H\ox}$ and $Q_{J,r}$ are continuous in the
  channel, with respect to the diamond (or completely bounded) norm. 
  Concretely, if $\left\|\mathcal{N}- \mathcal{M} \right\|_{\diamond} \leq \epsilon$, 
  then
  \begin{align*}
   \bigl| Q_{H\ox}(\mathcal{N})-Q_{H\ox}(\mathcal{M}) \bigr| &\leq 8\epsilon\log|B| + 4 H_2(\epsilon), \\
   \bigl| Q_{H}(\mathcal{N})-Q_{H}(\mathcal{M}) \bigr|       &\leq 8\epsilon\log|B| + 4 H_2(\epsilon), \\
   \bigl| Q_{J,r}(\mathcal{N})-Q_{J,r}(\mathcal{M}) \bigr|   &\leq 8\epsilon\log|B| + 4 H_2(\epsilon), \\
  \end{align*}
  with the binary entropy $H_2(x) = -x\log x -(1-x)\log(1-x)$.
\end{proposition}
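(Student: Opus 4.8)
The plan is to prove a single, uniform continuity estimate for the per-block functional $\frac1n I_c\bigl(\rho^{(n)};\mathcal{N}^{\otimes n}_{\eta^{(n)}}\bigr)$ and then extract all three capacity bounds at once from the variational formulas of Theorems~\ref{thm:QJ} and~\ref{QH+QHtens}. Each of $Q_H$, $Q_{H\ox}$ and $Q_{J,r}$ is there written as a supremum over $n$ of $\frac1n$ times an optimisation -- a maximum for the helper capacities, a nested max--min for the jammer -- of the coherent information $I_c\bigl(\rho^{(n)};\mathcal{N}^{\otimes n}_{\eta^{(n)}}\bigr)$, where for $Q_{H\ox}$ the state $\eta^{(n)}$ is a product and for $Q_{J,r}$ it is a single $\eta$ used $n$ times. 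Since $|\max f-\max g|\le\sup|f-g|$, and likewise for $\min$ and for nested $\max$--$\min$, it suffices to bound $\frac1n\bigl|I_c(\rho^{(n)};\mathcal{N}^{\otimes n}_{\eta^{(n)}})-I_c(\rho^{(n)};\mathcal{M}^{\otimes n}_{\eta^{(n)}})\bigr|$ by a quantity depending only on $\epsilon$ and $\log|B|$, uniformly over $n$, over $\rho^{(n)}$ and over the admissible $\eta^{(n)}$.

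Fix such data, let $\phi^{RA^n}$ purify $\rho^{(n)}$, and let $\sigma_k$ be the output on $RB^n$ obtained by applying $\mathcal{M}$ in the first $k$ tensor slots and $\mathcal{N}$ in the remaining $n-k$, so $\sigma_0=(\id\ox\mathcal{N}^{\otimes n}_{\eta^{(n)}})\phi$ and $\sigma_n=(\id\ox\mathcal{M}^{\otimes n}_{\eta^{(n)}})\phi$. I would telescope $I_c=I(R\rangle B^n)$ along this chain, relying on two facts. First, $\sigma_{k-1}$ and $\sigma_k$ differ only through slot $k$: feeding the fixed global state into $\mathcal{N}-\mathcal{M}$ on that slot (everything else acting as reference) and then applying the remaining CPTP maps gives $\frac12\|\sigma_{k-1}-\sigma_k\|_1\le\frac12\|\mathcal{N}-\mathcal{M}\|_\diamond\le\frac{\epsilon}{2}$. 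Second -- and this is the crux -- because every channel is trace preserving, discarding the $k$-th output leaves a marginal on $R\bar{B}_k$ (all systems other than $B_k$) that is independent of whether slot $k$ carries $\mathcal{N}$ or $\mathcal{M}$; hence $\sigma_{k-1}^{R\bar{B}_k}=\sigma_k^{R\bar{B}_k}$, and this holds even when $\eta^{(n)}$ is entangled across the environment factors, so the argument covers all three capacities simultaneously.

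It is precisely this marginal identity that keeps the estimate independent of the block length. Writing $I(R\rangle B^n)=S(B^n)-S(RB^n)$ and using the chain rule on the common marginals, the two contributions collapse to differences of the conditional entropies $S(B_k|\bar{B}_k)$ and $S(B_k|R\bar{B}_k)$, in which only the single system $B_k$ -- of dimension $|B|$, not $|B|^n$ -- appears as the ``numerator''. The Alicki--Fannes--Winter continuity bound for conditional entropy then controls each term by $O(\epsilon\log|B|)$ plus an entropic remainder $g(\epsilon)$ with $g(\epsilon)\to 0$, uniformly in $k$ and $n$. Summing the $n$ telescope terms and dividing by $n$ cancels the block length, and inserting the resulting uniform bound into the three variational formulas and optimising yields the displayed inequalities; the constants $8\log|B|$ and $4H_2(\epsilon)$ follow by using the cruder Fannes inequality~\cite{Fannes} together with the estimate $\frac12\|\sigma_{k-1}-\sigma_k\|_1\le\epsilon$.

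The one genuine obstacle is exactly this length-independence. A direct application of entropy continuity to $\sigma_0$ versus $\sigma_n$ carries a factor $\log|B^n|=n\log|B|$ (or $\log|R|\le n\log|A|$ in the conditional form), which is useless after division by $n$; naive subadditivity of the diamond norm, $\|\mathcal{N}^{\otimes n}-\mathcal{M}^{\otimes n}\|_\diamond\le n\epsilon$, compounds the problem. Defeating it requires the telescoping slot by slot combined with the trace-preservation/marginal-agreement identity -- the device of Leung and Smith -- so that each step only ever pays for the single output system $B_k$.
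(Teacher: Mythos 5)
Your proposal is correct and is essentially the paper's own proof: the paper simply cites Leung and Smith~\cite{LS09} and notes that their ``hybrid argument'' goes through with the joint input $\rho^{(n)}\ox\eta^{(n)}$ as the fixed parameter, and your slot-by-slot telescoping with the trace-preservation marginal identity $\sigma_{k-1}^{R\bar{B}_k}=\sigma_k^{R\bar{B}_k}$ is exactly that argument spelled out, including the reduction via $|\sup f-\sup g|\leq\sup|f-g|$ (and its max--min analogue) that handles all three capacities at once. One small naming slip: the per-step bound on the conditional-entropy differences requires the Alicki--Fannes inequality (as you correctly invoke earlier), not the unconditioned Fannes inequality of~\cite{Fannes}, since plain Fannes would reintroduce the dimension of the large conditioning system.
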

\begin{proof}
This is essentially the argument of Leung and 
Smith~\cite[Thm.~6; Lemma 1; Cor.~2]{LS09}. 
We can apply this because we have the formulas for these capacities in terms of coherent 
informations $\frac{1}{n} I_c\bigl( \rho^{(n)};\mathcal{N}^{\ox n}_{\eta^{(n)}} \bigr)$,
according to Theorem~\ref{QH+QHtens}.
The only new ingredient is that now the parameter is the joint input state
$\rho^{(n)} \ox \eta^{(n)}$, but fixing that the proof via the ``hybrid argument''
in~\cite{LS09} goes through. 
\end{proof}

\medskip
We remark here that it is not known at the time of writing, whether
$Q_J$ is continuous in the channel, a problem that is in fact closely 
tied to the question whether $Q_J=Q_{J,r}$ for all channels.

\bigskip
Given that in our formulation of the environment-assisted quantum capacity,
the ordinary quantum channel capacity is contained as a special case, 
it is clear
that we cannot make many general statements about either $Q_H$ or $Q_{H\ox}$.
However, focusing from now on on unitaries $V:AE \longrightarrow BF$, we
will in the sequel explore the assisted capacities by looking at specific
classes of interactions which exhibit interesting or even unexpected 
behaviour.

To start, what are the unitaries $V:AE \longrightarrow BF$, say with
equal dimensions of $A$ and $B$, with maximal capacity $\log|B|$? For
$Q_H(V)$ this seems a non-trivial question, but for $Q_{H\ox}(V)$,
invoking the result of~\cite{BEHY11}, we find that
$Q_{H\ox}(V)=\log|B|$ if and only if there exist states $\ket{\eta}\in E$,
$\ket{\phi}\in F$, and a unitary $U:A \longrightarrow B$ such that
\[
  V(\ket{\psi}^A\ket{\eta}^E) = (U\ket{\psi})^B\ket{\phi}^F,
\]
which in principle can be checked algebraically. In other words,
in this case, one of the channels $\mathcal{N}_\eta$ induced by
choosing an environment input state is the conjugation by a unitary.
In the search for non-trivial channels, we find the following result.

\begin{theorem}
  \label{thuni}
  Let $|A|=|B| = 2$, $|E|=|F| = d\le 4$ and consider $d$ linearly independent unitaries $U^{A\rightarrow B}_k\in{\rm U}(2)$. If the unitary $V: AE\longrightarrow BF$ is such that it induces a mixture of conjugation by $U_k$'s  
for any state $|\eta\rangle \in E$, 
then $V$ is a controlled-unitary gate:
\[
  V^{AE\rightarrow BF} = \sum_k U_k^{A\rightarrow B} \otimes |{f_k}\rangle^F\langle{e_k}|^E,
\]
with suitable orthonormal bases 
$\{\ket{e_k}\}_k$ and $\{\ket{f_k}\}_k$ of $E$ and $F$, respectively.
\end{theorem}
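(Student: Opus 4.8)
The plan is to read off the ``Kraus array'' of $V$ and show that the random-unitary hypothesis forces it to be diagonal in suitable bases. Fix orthonormal bases $\{\ket{e_j}\}_{j=1}^d$ of $E$ and $\{\ket{f_l}\}_{l=1}^d$ of $F$, and write
\[
  V\bigl(\ket{\psi}^A\ket{e_j}^E\bigr) = \sum_{l=1}^{d} (K_{lj}\ket{\psi})^B\,\ket{f_l}^F ,
\]
with operators $K_{lj}\in\mathcal{L}(A,B)$; isometry of $V$ is equivalent to $\sum_l K_{lj'}^\dagger K_{lj}=\delta_{jj'}\1$. For a pure input $\ket{\eta}=\sum_j\eta_j\ket{e_j}$ the induced channel $\mathcal{N}_\eta$ has Kraus operators $A_l(\eta)=\sum_j\eta_j K_{lj}$. (Note $d\le 4$ is automatic, since linearly independent unitaries live in the $4$-dimensional space $\mathcal{L}(\CC^2)$; it is a consistency condition rather than something the proof exploits.)

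Next I would extract the algebraic consequences of the hypothesis. For every $\eta$ the channel equals $\sum_k p_k(\eta)\,U_k(\cdot)U_k^\dagger$, so its Kraus set $\{A_l(\eta)\}$ is related to $\{\sqrt{p_k(\eta)}\,U_k\}$ by a unitary; hence each $A_l(\eta)$, and in particular each $K_{lj}$ (the case $\ket{\eta}=\ket{e_j}$), lies in $\mathcal{U}:=\mathrm{span}\{U_1,\dots,U_d\}$. Expand $K_{lj}=\sum_k C^k_{lj}U_k$ and collect coefficients into vectors $v_{kj}:=\sum_l C^k_{lj}\ket{f_l}\in F$, so that $A_l(\eta)=\sum_k \la f_l|\gamma_k(\eta)\ra\,U_k$ with $\gamma_k(\eta):=\sum_j\eta_j v_{kj}$. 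The unitary relating the two Kraus sets enforces $\la\gamma_k(\eta),\gamma_{k'}(\eta)\ra=p_k(\eta)\,\delta_{kk'}$; the off-diagonal part says $\la\gamma_k(\eta),\gamma_{k'}(\eta)\ra=0$ for all $\eta$ when $k\neq k'$. A polarization argument in $\eta$ (testing $\ket{e_j}$, $\ket{e_j}\pm\ket{e_{j'}}$ and $\ket{e_j}\pm i\ket{e_{j'}}$) upgrades this to $\la v_{kj},v_{k'j'}\ra=0$ for all $j,j'$ whenever $k\neq k'$, so the subspaces $S_k:=\mathrm{span}_j\{v_{kj}\}\subseteq F$ are pairwise orthogonal.

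The crux is a dimension count. The vectors $\tilde{K}_j:=\sum_l\ket{f_l}\ox K_{lj}=\sum_k v_{kj}\ox U_k$ satisfy $\la\tilde{K}_{j'},\tilde{K}_j\ra=\sum_l\Tr K_{lj'}^\dagger K_{lj}=2\delta_{jj'}$, hence are linearly independent and span a $d$-dimensional space. Since the $U_k$ are linearly independent, $\tilde{K}_j\in\bigoplus_k\bigl(S_k\ox\CC U_k\bigr)$, a space of dimension $\sum_k\dim S_k\le\dim F=d$ by orthogonality of the $S_k$. This squeezes $\sum_k\dim S_k=d$. The step that needs real care --- and which I expect to be the main obstacle --- is concluding that \emph{each} $S_k$ is nonzero: since $p_k(\eta)=\|\gamma_k(\eta)\|^2$, a vanishing $S_k$ means $U_k$ carries weight zero for every $\eta$, and then the conclusion genuinely fails (e.g.\ $V=U_1\ox W$ satisfies the hypothesis for any two independent $U_1,U_2$ but is not a controlled-unitary in the $U_k$). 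Thus the hypothesis must be read as asserting that all $d$ unitaries genuinely occur, giving $\dim S_k\ge 1$ and, by the squeeze, $\dim S_k=1$ for every $k$.

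To finish, let $\ket{g_k}$ be a unit vector spanning $S_k$; the $\ket{g_k}$ are orthonormal in $F$ and $v_{kj}=a_{kj}\ket{g_k}$ for scalars $a_{kj}$. Feeding this into $\la\tilde{K}_{j'},\tilde{K}_j\ra=2\delta_{jj'}$ together with $\Tr U_k^\dagger U_k=2$ yields $\sum_k\overline{a_{kj'}}a_{kj}=\delta_{jj'}$, so $a=(a_{kj})$ is a $d\times d$ unitary. Reassembling and using $C^k_{lj}=a_{kj}\la f_l|g_k\ra$,
\[
  V=\sum_{l,j} K_{lj}\ox\ket{f_l}\!\bra{e_j}
   =\sum_k U_k\ox\Bigl(\sum_{l,j}a_{kj}\la f_l|g_k\ra\,\ket{f_l}\!\bra{e_j}\Bigr)
   =\sum_k U_k\ox\ket{g_k}\!\bra{\tilde{e}_k},
\]
where $\ket{\tilde{e}_k}:=\sum_j\overline{a_{kj}}\ket{e_j}$ is, by unitarity of $a$, an orthonormal basis of $E$. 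Setting $\ket{f_k}:=\ket{g_k}$ and $\ket{e_k}:=\ket{\tilde{e}_k}$ gives exactly the asserted controlled-unitary form.
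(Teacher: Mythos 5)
Your proof is correct, and its skeleton is the same as the paper's: expand $V$ over an environment basis as $V(\ket{\psi}\ket{e_j})=\sum_k U_k\ket{\psi}\ox\ket{v_{kj}}$ (the paper writes this directly; you reach it through the Kraus array $K_{lj}$ and the unitary freedom of Kraus representations, which packages the same information the paper extracts from Choi--Jamio\l{}kowski states), derive $\la v_{kj}|v_{k'j'}\ra=0$ for $k\neq k'$ by exactly the polarization over $\eta$ that the paper uses (testing $\ket{e_j}$, $\ket{e_j}\pm\ket{e_{j'}}$, $\ket{e_j}\pm i\ket{e_{j'}}$), conclude that each family $\{v_{kj}\}_j$ spans a line $\CC\ket{f_k}$ with orthonormal $\ket{f_k}$, and finally obtain orthonormality of the $\ket{e_k}$ from the isometry condition $\sum_l K_{lj'}^\dagger K_{lj}=\delta_{jj'}\1$, which is the paper's closing step as well. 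Where you genuinely go beyond the paper is at the rank-one step: the paper passes from pairwise orthogonality of the subspaces $S_k=\mathrm{span}_j\{v_{kj}\}$ straight to ``the only possibility is $\ket{v_{jk}}=c_{jk}\ket{f_k}$,'' which does not follow unless every $S_k$ is nonzero. Your explicit squeeze $d=\dim\mathrm{span}\{\tilde K_j\}\leq\sum_k\dim S_k\leq\dim F=d$ makes the count rigorous, and your counterexample $V=U_1\ox W$ --- which induces, for every $\ket{\eta}$, the trivial mixture concentrated on $U_1$, yet is not a controlled unitary over the given linearly independent family --- shows that under the weak reading of the hypothesis (zero weights allowed) the theorem as literally stated fails, so the hypothesis must indeed be read as asserting that each $U_k$ occurs with nonzero weight for some $\eta$, whence $\dim S_k\geq 1$ and hence $=1$ for all $k$. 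This nondegeneracy observation is a real refinement: it pinpoints and repairs an implicit assumption that the paper's own proof elides.
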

\begin{proof}
Let us start from the requirement that $V$ gives rise to mixture of conjugation by $U_k$'s 
in the states $\{\ket{j}\}_j$ of a basis of the environment $E$. 
W.l.o.g.~we can write the action of $V$ as follows
\begin{equation}
\label{V}
  V |\psi\rangle^A |j\rangle^E=\sum_k U_k |\psi\rangle |v_{jk}\rangle,
\end{equation}
where $|v_{jk}\rangle$ are non-normalized states of $E$. Then, 
let us consider a standard maximally entangled state $\ket{\Psi}^{RA}$
between a reference system $R$ and the input system $A$. We have
\begin{equation*}
(I \otimes V) |\Psi\rangle^{RA} |j\rangle^E=\sum_k (I\otimes U_k) |\Psi\rangle |v_{jk}\rangle
=: \sum_k |\Psi_k\rangle |v_{jk}\rangle,
\end{equation*}
with all the $|\Psi_k\rangle^{RA}$ maximally entangled states. 
The trace over $E$ gives the Choi-Jamiolkowski state of the channel which in turn must represent 
a mixture of conjugations by $U_k$'s, hence the following equality must hold true:
\begin{equation*}
\sum_{kk'} \langle v_{jk'} | v_{jk} \rangle |\Psi_k\rangle\langle\Psi_{k'}|
=\sum_k p_k |\Psi_k\rangle\langle\Psi_k|,
\end{equation*} 
for some probability distribution $\{p_k\}_k$.
Since the $\ket{\Psi_k}$ are linearly independent (as a consequence of the linear 
independence of the unitaries $U_k$), 
we necessarily must have vanishing scalar products 
$\langle v_{jk} | v_{j k'}\rangle=0$ for all $j$ and all $k\neq k'$. 

For a generic environment state $|\eta\rangle=\sum_j \eta_j |j\rangle$  it is
\begin{equation}
\label{Veta}
V |\psi\rangle \sum_j \eta_j |j\rangle=\sum_k U_k |\psi\rangle \sum_j \eta_j |v_{jk}\rangle,
\end{equation}
and using the same argument as above we end up
with the requirement that the states $\{\sum_j\eta_j |v_{jk}\rangle\}_k$ have be orthogonal (for different values 
of $k$). 
Actually this must be true for any value of the $\eta_j$s, hence the only possibility is that the vectors 
$|v_{jk}\rangle$ result as $|v_{jk}\rangle=c_{jk} | f_k\rangle$ with $\{| f_k\rangle \}_k$ orthonormal.

This can be proved by considering the scalar product between 
\begin{equation*}
\sum_j\eta_j |v_{j r}\rangle \quad {\rm and} \quad \sum_j\eta_j |v_{j s}\rangle,
\end{equation*}
(for arbitrary values $r\neq s$)
with all $\eta_j=0$ except $\eta_m$ and $\eta_n$ 
(for any values $m\neq n$), which yield the following conditions:
\begin{equation*}
\left( \overline{\eta}_m \langle v_{m r}| + \overline{\eta}_n \langle v_{nr} |\right)
\left(\eta_m | v_{ms} \rangle + \eta_n |v_{ns}\rangle  \right)=0.
\end{equation*}
Then, we may notice that
\begin{align*}
  \eta_m &= \eta_n=1 \ \Rightarrow\ \langle v_{mr} |v_{ns}\rangle= - \langle v_{nr} |v_{ms}\rangle, \\
  \eta_m &= \eta_n=i \ \Rightarrow\ \langle v_{mr} |v_{ns}\rangle=  \langle v_{nr} |v_{ms}\rangle.
\end{align*}
To simultaneously satisfy these conditions it must hold that
$ \langle v_{mr} |v_{ns}\rangle=  \langle v_{nr} |v_{ms}\rangle=0$.
Due to the arbitrariness of $r,s,m,n$ we can conclude that 
$\langle v_{jk}| v_{j'k'}\rangle=0$ for $k\neq k'$ and for any $j,j'$, 
i.e.~$|v_{jk}\rangle=c_{jk} | f_k\rangle$ with $\{| f_k\rangle \}_k$ orthonormal.

Thus, the action \eqref{Veta} of $V$ in the environment basis states 
$\{|j\rangle\}_j$  will result as
\begin{equation*}
  V |\psi\rangle  |j\rangle= \sum_k U_k |\psi\rangle  c_{jk} | f_{k}\rangle.
\end{equation*}
Therefore, in the basis $\{|j\rangle\}_j$ the unitary $V$ can be written as
\begin{equation*}
V= \sum_{j,k} U_k \otimes  c_{jk} | f_{k}\rangle\langle j |
                = \sum_k U_k \otimes  | f_{k}\rangle \langle e_k |,
\end{equation*}
where we have defined the vectors
\begin{equation*}
  |e_k\rangle:=\sum_j \overline{c}_{kj} |j\rangle.
\end{equation*}

Finally using the condition 
\begin{equation*}
  \sum_k \langle v_{j k} | v_{j'k} \rangle=\delta_{jj'}
\end{equation*}
coming from the unitarity of $V$, we have
\begin{equation*}
\sum_{j} c_{jk} \overline{c}_{k'j} =\delta_{kk'},
\end{equation*} 
expressing to the orthonormality of $\{|e_k\rangle\}_k$.
\end{proof}

\medskip
We conjecture furthermore that for $|A|=|B| = 2$ and $|E|=|F| = d$
arbitrary, if $V: AE\longrightarrow BF$ is such that it induces random-unitary
(equivalently: unital~\cite{KR01}) channels $\mathcal{N}_\eta$
for all states $\ket{\eta} \in E$, then $V$ is essentially a controlled-unitary
gate:
\[
  V^{AE\rightarrow BF} = \sum_j U_j^{A\rightarrow B} \ox \ket{f_j}^F\bra{e_j}^E,
\]
with qubit unitaries $U_j$ and with suitable orthonormal bases 
$\{\ket{e_j}\}$ and $\{\ket{f_j}\}$ of $E$ and $F$, respectively.

\medskip
To turn the other way, what are the useless unitary interactions, i.e.~those
with $Q_H(V)=0$, or at least $Q_{H\ox}(V)=0$? In the next section we will
encounter some families of two-qubit $V$ with the latter property. 
On the other hand, unitaries with $Q_H(V)=0$ do not seem to be so
obvious, except for the example of $\SWAP$, which swaps two isomorphic
systems $A$ and $E$, i.e.~$\SWAP(\ket{\psi}^A\ket{\varphi}^E) = \ket{\varphi}^B\ket{\psi}^F$,
because it results in channels with constant output.


\section{Two-qubit unitaries}
\label{sec:two-qubit}
In this section we will look at two-qubit unitary interactions, hence in principle 
study all qubit channels which can be described by a single qubit environment. 
This is motivated by quantum channels deriving from such 
unitaries having nice properties, which allow us to characterize their 
environment-assisted capacities.

A general two-qubit unitary interaction can be described by $15$ real parameters. 
For the analysis of quantum capacity under consideration we follow the arguments 
used in~\cite{KC01} to reduce the parameters to $3$ by the action of local 
unitaries. 

\medskip
\begin{lemma}[Kraus/Cirac~\cite{KC01}]
  Any two-qubit unitary interaction is equivalent, up to local unitaries before and
  after the gate, to one of the form 
  \[\begin{split}
    U^{AE} &= \sum_k e^{-i\lambda_{k}}\proj{\Phi_{k}}    \\
           &= \exp\bigl(-\alpha_x \sigma_x\ox\sigma_x 
                        -\alpha_y \sigma_y\ox\sigma_y 
                        -\alpha_z \sigma_z\ox\sigma_z\bigr),
  \end{split}\]
  with 
  \begin{align*}
    \lambda_{1} &= \frac{\alpha_{x} - \alpha_{y} + \alpha_{z}}{2}, \\
    \lambda_{2} &= \frac{-\alpha_{x} + \alpha_{y} + \alpha_{z}}{2}, \\
    \lambda_{3} &= \frac{-\alpha_{x} - \alpha_{y} - \alpha_{z}}{2}, \\
    \lambda_{4} &= \frac{\alpha_{x} + \alpha_{y} - \alpha_{z}}{2}, 
  \end{align*}
  and $\vert\Phi_k\rangle$ the so-called ``magic basis''~\cite{HW97},
  \begin{align*}
    \vert \Phi_{1} \rangle &= \frac{\vert 00 \rangle + \vert 11 \rangle}{\sqrt{2}}, \\
    \vert \Phi_{2} \rangle &= \frac{-i(\vert 00 \rangle - \vert 11 \rangle)}{\sqrt{2}}, \\
    \vert \Phi_{3} \rangle &= \frac{\vert 01 \rangle - \vert 10 \rangle}{\sqrt{2}}, \\
    \vert \Phi_{4} \rangle &= \frac{-i(\vert 01 \rangle + \vert 10 \rangle)}{\sqrt{2}}.
  \end{align*}
  This is of course the familiar Bell basis, but note the peculiar phases.
  \qed
\end{lemma}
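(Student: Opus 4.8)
The plan is to exploit the special structure of the ``magic basis'' $\{\ket{\Phi_k}\}$, whose defining feature is that local unitaries act on it by \emph{real} orthogonal matrices. Concretely, let $Q$ be the matrix whose columns are the $\ket{\Phi_k}$ expressed in the computational basis, and conjugate everything by $Q$: for a two-qubit gate $U$ write $\tilde U = Q^\dagger U Q$. The fact I would establish first is that for single-qubit unitaries $U_A,U_E$ of unit determinant, $Q^\dagger(U_A\ox U_E)Q$ is a real matrix in $SO(4)$, and conversely every element of $SO(4)$ arises this way. This is precisely the accidental isomorphism $SO(4)\cong (SU(2)\times SU(2))/\{\pm\1\}$, and it converts the assertion ``equivalent up to local unitaries before and after'' into ``equivalent up to left and right multiplication by real orthogonal matrices.''

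Granting this, I would reduce $\tilde U$ to canonical form as follows. Consider the complex symmetric unitary matrix $S=\tilde U^{T}\tilde U$; it is invariant under $\tilde U\mapsto O\tilde U$ for real orthogonal $O$, so it captures exactly the left-invariant content. A symmetric unitary matrix decomposes as $S=O D O^{T}$ with $O$ real orthogonal and $D$ diagonal unitary: writing $S=A+iB$ with $A,B$ real, symmetry and unitarity force $A,B$ to be commuting real symmetric matrices with $A^2+B^2=\1$, which can be simultaneously orthogonally diagonalized (a Takagi/Autonne-type argument). Setting $\Lambda=\sqrt D=\operatorname{diag}(e^{-i\lambda_k})$ and $O_L:=\tilde U O\Lambda^{-1}$, a short computation gives $O_L^{T}O_L=\1$; since $O_L$ is also unitary, being simultaneously orthogonal and unitary forces it to be \emph{real}, hence $O_L\in O(4)$. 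Therefore $\tilde U=O_L\,\Lambda\,O^{T}$, a two-sided real-orthogonal sandwich of a diagonal phase matrix, and sign flips of columns place both orthogonal factors in $SO(4)$ without changing $\Lambda$.

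Transforming back through $Q$ turns the two orthogonal factors into local unitaries before and after the gate, while the middle factor becomes $Q\Lambda Q^\dagger=\sum_k e^{-i\lambda_k}\proj{\Phi_k}$, which is the asserted canonical form. The final task is to identify this with $\exp(-\alpha_x\sigma_x\ox\sigma_x-\alpha_y\sigma_y\ox\sigma_y-\alpha_z\sigma_z\ox\sigma_z)$ and read off the relations between the $\lambda_k$ and the $\alpha_\mu$. This is a direct verification: each magic-basis vector is a simultaneous eigenvector of $\sigma_x\ox\sigma_x$, $\sigma_y\ox\sigma_y$ and $\sigma_z\ox\sigma_z$, so the four $\pm1$ eigenvalue patterns immediately produce the stated linear system for $\lambda_1,\dots,\lambda_4$.

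The main obstacle is the very first step — proving the magic-basis correspondence and, in particular, the surjectivity of $SU(2)\times SU(2)\to SO(4)$. Everything downstream (the symmetric-unitary diagonalization, the orthogonal-plus-unitary-implies-real trick, and the eigenvalue bookkeeping) is essentially mechanical once that correspondence is in hand. I would either cite the standard two-to-one homomorphism or verify it on generators, checking that conjugating $\sigma_\mu\ox\1$ and $\1\ox\sigma_\mu$ by $Q$ yields real antisymmetric generators spanning the Lie algebra of $SO(4)$.
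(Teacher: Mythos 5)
The paper itself gives no proof of this lemma---it is quoted from Kraus and Cirac~\cite{KC01} with an immediate \qed---and your argument is exactly the standard proof from that source: the magic-basis correspondence between $SU(2)\times SU(2)$ (mod sign) and $SO(4)$, followed by real-orthogonal diagonalization of the symmetric unitary $\tilde U^{T}\tilde U$, which is the same invariant the paper itself invokes in its controlled-unitary example via~\cite{HVC02}; all steps, including the ``unitary plus complex-orthogonal implies real'' trick and the cancellation $O_L^{T}O_L = \Lambda^{-1}O^{T}\tilde U^{T}\tilde U\,O\,\Lambda^{-1} = \Lambda^{-1}D\Lambda^{-1} = \1$, are sound. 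Two cosmetic caveats only: your eigenvalue bookkeeping tacitly uses the convention $U=\exp\bigl(-\tfrac{i}{2}(\alpha_x\,\sigma_x\ox\sigma_x+\alpha_y\,\sigma_y\ox\sigma_y+\alpha_z\,\sigma_z\ox\sigma_z)\bigr)$, which is what produces the factor $\tfrac12$ in the stated $\lambda_k$ (the paper's displayed exponential omits the $-i/2$), and the final sign-flip step to place both orthogonal factors in $SO(4)$ may additionally require shifting one $\lambda_k$ by $\pi$ or absorbing a global phase into a local $U(2)$ factor when the two determinants do not flip in tandem---both harmless within the lemma's notion of equivalence.
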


\medskip
According to the definition of the capacities, the local unitaries on $A$, $B$, $E$ and $F$ 
do not affect the environment-assisted quantum capacity, as they could be incorporated 
into the encoding and decoding maps, respectively, or can be reflected in a different choice
of environment state.
The parameter space $(\alpha_{x},\alpha_{y},\alpha_{z})$ is further restricted by using
the following properties:
\begin{equation}
  U(\alpha_{x},\alpha_{y},\alpha_{z}) 
        = -i(\sigma_x \ox \sigma_x)\, U(\alpha_{x} + \pi,\alpha_{y},\alpha_{z}),
\end{equation}
and similarly 
\begin{equation}
  U\left(\frac{\pi}{2}\!+\!\alpha_{x},\alpha_{y},\alpha_{z}\right) 
    \!=\! 
  -i(\sigma_x \ox \1)\, U^*\left(\frac{\pi}{2}\!-\!\alpha_{x},\alpha_{y},\alpha_{z}\right) (\1 \ox \sigma_x),
\end{equation} 
where $U^*$ is the complex conjugate of $U$. 
Note that the latter has the same 
environment-assisted quantum capacities; indeed, any code for $U$ is transformed
into one for $U^*$ by taking complex conjugates.

Hence the parameter space given by
\begin{equation}
\label{Parspace}
  \mathfrak{T} = \left\{ (\alpha_x,\alpha_y,\alpha_z) : 
                         \frac{\pi}{2} \geq \alpha_x \geq \alpha_y \geq \alpha_z \geq 0 \right\}
\end{equation}
describes all two-qubit unitaries up to local basis choice and
complex conjugation.
This forms a tetrahedron with vertices $(0,0,0)$, $(\frac{\pi}{2},0,0)$,
$(\frac{\pi}{2},\frac{\pi}{2},0)$ and $(\frac{\pi}{2},\frac{\pi}{2},\frac{\pi}{2})$,
see Fig.~\ref{fig:uad}. 
Familiar two-qubit gates can easily be identified within this parameter space: 
for instance, $(0,0,0)$ represents the identity $\1$, 
$(\frac{\pi}{2},0,0)$ the CNOT, $(\frac{\pi}{2},\frac{\pi}{2},0)$ the DCNOT (double
controlled not), and $(\frac{\pi}{2},\frac{\pi}{2},\frac{\pi}{2})$ 
the SWAP gate, respectively.

\medskip
\begin{example}
To illustrate this parametrization, let us look at a controlled-unitary 
$V$ (cf.~Theorem~\ref{thuni}) of the form 
$V= \vert 0 \rangle \langle 0 \vert \ox U_0 + \vert1\rangle\langle1\vert \ox U_1$, 
where  $U_i \in {\rm SU}(2)$. 
One can work out that this has parametric representation $(t,0,0)$,
i.e.~in the parameter tetrahedron $\mathfrak{T}$, 
these unitaries are on the edge joining the identity $\1$ and CNOT. 

To see this, we use the argument described in Appendix A of~\cite{HVC02}: 
Observe that the spectrum of $V^T V$ is 
$(e^{-2i\lambda_1} ,e^{-2i\lambda_2},e^{-2i\lambda_3},e^{-2i\lambda_4} )$, 
where the transpose operator is with respect to the magic basis.
In this way, $(\vert 0\rangle \langle0\vert \ox \1)^{T} = \vert 1\rangle \langle1\vert \ox \1$
and $(\1 \ox U)^{T} = (\1 \ox U^{\dag})$, thus 
$V^{T} = \vert 1\rangle \langle 1\vert \ox U_0^{\dag} + \vert 0 \rangle \langle 0 \vert \ox U_1^{\dag}$ 
and 
$V^{T} V = \vert 1 \rangle \langle 1 \vert \ox U_0^{\dag}U_1 + \vert 0 \rangle \langle 0 \vert \ox U_1^{\dag} U_0$. 
The eigenvalues of  $U = U_0^{\dag}U_1$ are $e^{id}$ and $e^{-id}$, where $2\cos d = \Tr U$. 
The spectrum of thus $V^{T}V$ is $\left( e^{id},e^{id},e^{-id},e^{-id} \right)$. 
Using the order property 
$\frac{\pi}{2} \geq\lambda_4 \geq \lambda_1 \geq \lambda_2 \geq \lambda_3 \geq -\frac{3\pi}{4}$ 
(condition \eqref{Parspace} written in terms of $\lambda_k$) and solving the linear 
equations in $\alpha_x$, $\alpha_y$ and $\alpha_z$, we get the parametric point 
as $(t,0,0)$ where $t=d$ when $d \leq \frac{\pi}{2}$ and $t = \pi - d$ when $d \geq \frac{\pi}{2}$.
\end{example}

\medskip
Now we come to the main reason why we investigate this class of unitaries,
apart from obviously furnishing the smallest possible examples:
Recall that a quantum channel $\mathcal{N}:\mathcal{L}(A) \rightarrow \mathcal{L}(B)$ 
is called \emph{degradable}~\cite{DS05}
if there exists a degrading CPTP map $\mathcal{M}:\mathcal{L}(B) \rightarrow \mathcal{L}(F)$ such 
that for any input $\rho^{A}$, 
$\widetilde{\mathcal N}(\rho) = \mathcal{M}(\mathcal{N}(\rho))$. 
That is, Bob can simulate the environment output by applying a CPTP map on his 
system. It means that the complementary channel is noisier than the 
channel itself, in an operationally precise sense.

A quantum channel is \emph{anti-degradable} if its complementary channel is degradable, 
i.e.~if there exists a CPTP map $\mathcal{M}:\mathcal{L}(F) \rightarrow \mathcal{L}(B)$ such that for 
any input $\rho^{A}$, 
$\mathcal{N}(\rho^{AE}) = \mathcal{M}(\widetilde{\mathcal N}(\rho^{A}))$. 

It is well-known that the quantum capacity of anti-degradable channels 
is zero, by the familiar \emph{cloning argument}: Namely, if an anti-degradable 
channel were to have positive quantum capacity, $F$ can apply the degrading map
followed by the same decoder as $B$ and thus $A$ would be transmitting the
same quantum information to $B$ and $F$. 
This is in contradiction to the no-cloning theorem as observed in~\cite{BDS97}.
On the other hand, if a channel is degradable, Devetak and Shor~\cite{DS05}
showed that the quantum capacity can be characterized very concisely.
Namely, they proved that for degradable or anti-degradable 
$\mathcal{N}_i:\mathcal{L}(A_i) \longrightarrow \mathcal{L}(B_i)$,
\[\begin{split}
  \max_{\rho^{(n)}} I_c&\bigl(\rho^{(n)};\mathcal{N}_1\ox\cdots\mathcal{N}_n\bigr) \\
    &= \max_{\rho_1\ox\cdots\ox\rho_n} 
        I_c\bigl(\rho_1\ox\cdots\ox\rho_n;\mathcal{N}_1\ox\cdots\mathcal{N}_n\bigr) \\
    &= \sum_{i=1}^n \max_{\rho_i} I_c(\rho_i;\mathcal{N}_i),
\end{split}\]
which implies for degradable channel $\mathcal{N}$ that
\[
  Q(\mathcal{N}) = \max_\rho I_c(\rho;\mathcal{N}).
\]
Furthermore, the coherent information in this case is a concave function of 
$\rho$, so the maximum can be found efficiently.

Notice that by interchanging the registers in $B$ and $F$ we go from degradable 
channels to anti-degradable ones, and vice versa. But many channels are 
neither degradable nor anti-degradable. 
However, in~\cite{WPG07} it was shown that qubit channels with one qubit environment 
are either degradable or anti-degradable or both. Hence, for any initial state 
of the environment, all the two qubit unitary interactions give rise 
to qubit channels that are either degradable or anti-degradable or both. 
ref.~\cite{WPG07} also provided an analytical criterion for determining whether a channel 
is degradable or anti-degradable (or both, becoming \emph{symmetric} in such a case).  
The criterion is revisited here for our purposes.

\medskip
\begin{lemma}[Wolf/Perez-Garc\'{\i}a~\cite{WPG07}]
  \label{criterionDAD}
  Given an isometry $V : A \ox E \rightarrow B\ox F$ and an initial input to environment 
  $\ket{\eta}\in E$, let $\{K_i\}$ be the Kraus operators in normal form
  (i.e.~$\tr K_i^\dagger K_j = 0$ for $i\neq j$) of the qubit channel
  $\mathcal{N}_{\eta}(\rho) = \Tr_F V(\rho^A \ox \eta^E)V^\dag$.

  Then, the condition for degradability is given by the sign of the 
  $\det(2K_0^\dagger K_0 - \1)$. The channel is degradable when 
  $\det(2K_0^\dagger K_0 - \1) \geq 0$, anti-degradable when 
  $\det(2K_0^\dagger K_0 - \1) \leq 0$, and symmetric when $\det(2K_0^\dagger K_0 - \1) = 0$.
  \qed
\end{lemma}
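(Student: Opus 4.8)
The plan is to combine the dichotomy established in~\cite{WPG07} — that every qubit channel arising from a qubit environment is degradable or anti-degradable (or both) — with an explicit identification of the boundary between the two regimes. Throughout I use that such a channel has exactly two Kraus operators $K_0,K_1$ (since $\dim E=2$), that trace preservation gives $K_0^\dagger K_0 + K_1^\dagger K_1 = \1$, so that $2K_0^\dagger K_0 - \1 = K_0^\dagger K_0 - K_1^\dagger K_1$, and that normal form reads $\Tr K_0^\dagger K_1 = 0$. The first observation is that degradability is invariant under composing $\mathcal{N}_\eta$ with unitaries on the input $A$, the output $B$, and the environment output $F$: input and output unitaries only conjugate the degrading map, while an $F$-unitary merely relabels $\widetilde{\mathcal{N}_\eta}$. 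Hence I may diagonalise the positive operator $K_0^\dagger K_0$ with an input unitary and absorb the polar unitary of $K_0$ into an output unitary, reducing to a canonical family in which $K_0^\dagger K_0 = \mathrm{diag}(p,q)$ with $p,q\in[0,1]$; then $\det(2K_0^\dagger K_0 - \1) = (2p-1)(2q-1)$.

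Given the dichotomy, it suffices to (i) locate the symmetric channels — those simultaneously degradable and anti-degradable — and (ii) fix the sign of the determinant on each side. For (i) I would show that $\mathcal{N}_\eta$ is symmetric, i.e.\ $\widetilde{\mathcal{N}_\eta} = \mathrm{Ad}_W\circ\mathcal{N}_\eta$ for some unitary $W:B\to F$, if and only if $\det(2K_0^\dagger K_0 - \1)=0$ (equivalently $p=\tfrac12$ or $q=\tfrac12$). This is done by writing both $\mathcal{N}_\eta$ and its complement in the canonical form above and comparing their unitary invariants; symmetry forces the two invariant parameter sets to coincide, which happens exactly on the vanishing locus of the determinant. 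The guiding model computation is the amplitude-damping channel with parameter $\gamma$, for which $\det(2K_0^\dagger K_0-\1)=1-2\gamma$, whose complement is the amplitude-damping channel with parameter $1-\gamma$, the two coinciding precisely at $\gamma=\tfrac12$.

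For (ii) I would establish two facts. First, a sign-flip lemma: if $\{L_0,L_1\}$ is the normal form of the complementary channel $\widetilde{\mathcal{N}_\eta}$, then $\det(2K_0^\dagger K_0-\1)$ and $\det(2L_0^\dagger L_0-\1)$ have opposite signs; this follows by writing the $L_b$ through the standard transpose relation $(L_b)_{ia}=(K_i)_{ba}$ and evaluating the two $2\times2$ determinants in the canonical form. Second, a direct-construction lemma: when $\det(2K_0^\dagger K_0-\1)>0$ the channel is degradable, which I prove by exhibiting the degrading CPTP map $\mathcal{M}:B\to F$ explicitly in the canonical parametrisation and checking its complete positivity via the Choi/Fujiwara--Algoet conditions. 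Combining these with the dichotomy closes the argument: $\det>0$ forces degradable (and, by (i), not anti-degradable); by the sign-flip lemma $\det<0$ forces $\widetilde{\mathcal{N}_\eta}$ degradable, hence $\mathcal{N}_\eta$ anti-degradable; and $\det=0$ is exactly the symmetric case of (i). Therefore $\mathcal{N}_\eta$ is degradable iff $\det(2K_0^\dagger K_0-\1)\ge 0$ and anti-degradable iff $\det(2K_0^\dagger K_0-\1)\le 0$.

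The main obstacle is the direct-construction step: verifying that the candidate degrading map is completely positive across the whole canonical family. The reduction above still leaves a residual off-diagonal unitary in the polar part of $K_1$ — constrained only by the normal-form relation $\Tr K_0^\dagger K_1=0$ — so the degrading map is not simply another amplitude-damping channel, and one must control a two- or three-parameter family rather than a single one; ensuring positivity of its Choi matrix is where the genuine work lies. The sign-flip lemma and the boundary characterisation (i), by contrast, reduce to finite, explicit $2\times2$ determinant computations once the canonical form is in place.
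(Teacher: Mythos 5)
Note first that the paper offers no proof of this lemma: it is imported verbatim from~\cite{WPG07} and closed with a qed-box, so the benchmark is the original Wolf/Perez-Garc\'{\i}a argument. Your plan in fact shadows that argument (canonical form via local unitaries, complement through the transpose relation $(L_b)_{ia}=(K_i)_{ba}$, an explicit degrading map whose complete positivity decides everything) --- but it stops exactly at the step that \emph{is} the theorem. Your ``direct-construction lemma'', that $\det(2K_0^\dagger K_0-\1)>0$ implies degradability, is the entire substance of the result, and you leave it as a declared obstacle: no candidate degrading map is written down and no Choi-positivity computation is performed. In~\cite{WPG07} this is done by taking the degrading map to be $\widetilde{\mathcal{N}}\circ\mathcal{N}^{-1}$ on the generic (invertible) channel and checking complete positivity of its Choi matrix; the determinant criterion is precisely the output of that computation. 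Worse, the degradable-or-anti-degradable dichotomy you import as an input is in~\cite{WPG07} a \emph{corollary} of this same determinant analysis, not an independent fact, so your architecture is close to circular: even granting the dichotomy, everything still hinges on the construction you defer.

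There is also a correctness problem in the two auxiliary steps you do sketch. The normal form is not unique when $\Tr K_0^\dagger K_0=\Tr K_1^\dagger K_1$: in that degenerate case every unitary remixing of $K_0,K_1$ remains in normal form, and the determinant depends on the representative. Concretely, consider $\mathcal{M}(\rho)=\rho_{00}\proj{\phi_+}+\rho_{11}\proj{\phi_-}$ with $\ket{\phi_\pm}=(\ket{0}\pm\ket{1})/\sqrt{2}$. Its Kraus operators $L_0=\ket{\phi_+}\!\bra{0}$, $L_1=\ket{\phi_-}\!\bra{1}$ are in normal form and give $\det(2L_0^\dagger L_0-\1)=-1$; yet $\mathcal{M}$ is symmetric even in your strong sense, since its complement is the completely dephasing channel $\Delta$ and conjugation by the Hadamard unitary $H$ degrades in both directions ($\Delta=H\mathcal{M}(\cdot)H$ and $\mathcal{M}=H\Delta(\cdot)H$), while the equally valid normal form $L_0'=(L_0+L_1)/\sqrt{2}$, $L_1'=(L_0-L_1)/\sqrt{2}$ yields determinant $0$. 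So your step (i), ``symmetric if and only if $\det=0$'', is false as stated, and your sign-flip lemma in the strict form ``opposite signs'' fails on the same example (one can even arrange determinant $-1$ for both $\mathcal{M}$ and $\Delta$ simultaneously). This is exactly why the lemma is phrased with one-directional ``when'' clauses and weak inequalities. The clean repair is to prove directly that $\det\geq 0$ implies degradable and $\det\leq 0$ implies anti-degradable for \emph{every} normal-form representative, from which the $\det=0$ case delivers symmetry for free --- but that is again the Choi computation your proposal postpones.
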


\medskip
This characterization has the consequence that the separable environment-assisted
quantum capacity of two-qubit unitaries can be calculated fairly easily:

\medskip
\begin{theorem}
  \label{thm:2-qubit-Q-Htens}
  For a two-qubit unitary $V:AE \longrightarrow BF$,
  \[
    Q_{H\ox}(V) = \max_{\eta^E} \max_{\rho^A} I_c(\rho^A;\mathcal{N}_\eta).
  \]
  In addition, the maximization over helper states $\eta$ may be restricted
  to pure states such that $\mathcal{N}_\eta$ is degradable, and for each
  such fixed $\eta$, the inner maximization over $\rho$ is a convex
  optimization problem (concave function on a convex domain).
\end{theorem}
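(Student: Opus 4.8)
The plan is to combine the general capacity formula from Theorem~\ref{QH+QHtens} with the special structure of two-qubit unitaries established by the two preceding lemmas. Recall that by \eqref{eq:QHtens},
\[
  Q_{H\ox}(V) = \sup_{n} \max_{\rho^{(n)},\,\eta_1\ox\cdots\ox\eta_n}
                   \frac{1}{n} I_c\bigl( \rho^{(n)};\mathcal{N}_{\eta_1}\ox\cdots\ox\mathcal{N}_{\eta_n} \bigr).
\]
The goal is to show that the regularization collapses: the supremum over $n$ is achieved already at $n=1$, and moreover that the optimal $\eta$ can be taken so that $\mathcal{N}_\eta$ is degradable. The crucial structural input is the result of Wolf and Perez-Garc\'{\i}a cited above: for a two-qubit unitary, \emph{every} induced channel $\mathcal{N}_\eta$ (for any pure $\eta\in E$) is either degradable or anti-degradable. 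This dichotomy is exactly what makes the additivity of coherent information from Devetak-Shor applicable.

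First I would fix an arbitrary product helper state $\eta^{(n)} = \eta_1\ox\cdots\ox\eta_n$ and examine the induced block channel $\mathcal{N}_{\eta_1}\ox\cdots\ox\mathcal{N}_{\eta_n}$. Each factor $\mathcal{N}_{\eta_i}$ is a qubit channel with a qubit environment, hence by Lemma~\ref{criterionDAD} it is degradable or anti-degradable. If \emph{every} factor is degradable, then the Devetak-Shor additivity formula quoted in the text applies verbatim, giving
\[
  \max_{\rho^{(n)}} \frac{1}{n} I_c\bigl(\rho^{(n)};\mathcal{N}_{\eta_1}\ox\cdots\ox\mathcal{N}_{\eta_n}\bigr)
    = \frac{1}{n}\sum_{i=1}^n \max_{\rho_i} I_c(\rho_i;\mathcal{N}_{\eta_i})
    \leq \max_{\eta}\max_\rho I_c(\rho;\mathcal{N}_\eta),
\]
so this configuration cannot exceed the single-letter quantity. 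The remaining task is to handle blocks containing anti-degradable factors, and to show that they never help.

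The key observation for the anti-degradable factors is the cloning argument recalled earlier in the excerpt: an anti-degradable channel has zero quantum capacity, and more is true operationally. I would argue that any tensor factor $\mathcal{N}_{\eta_i}$ that is anti-degradable contributes nothing positive and can be dropped. Concretely, the coherent information of a product that includes an anti-degradable factor is bounded by the coherent information of the product of the remaining (degradable) factors: one can appeal to the fact that appending an anti-degradable channel cannot increase the regularized coherent information, since Bob could reconstruct that factor's environment output and the no-cloning obstruction forces its coherent-information contribution to be nonpositive. After discarding all anti-degradable factors we are left with a product of degradable channels, to which the displayed additivity again applies, so every product-helper configuration of any block length is dominated by $\max_{\eta}\max_\rho I_c(\rho;\mathcal{N}_\eta)$ with $\eta$ ranging over those pure states making $\mathcal{N}_\eta$ degradable. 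The matching lower bound is immediate, since the right-hand side is the $n=1$ term in \eqref{eq:QHtens}. Finally, for a fixed degradable $\mathcal{N}_\eta$, the concavity of $\rho\mapsto I_c(\rho;\mathcal{N}_\eta)$ over the convex set of density matrices — stated in the excerpt as a property of degradable channels — gives the asserted convex-optimization structure of the inner maximization.

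I expect the main obstacle to be the rigorous treatment of mixed blocks, i.e.~establishing cleanly that an anti-degradable tensor factor can be deleted without decreasing the optimum and without breaking the additivity of the surviving degradable factors. The subtlety is that Devetak-Shor additivity is stated for products that are \emph{entirely} degradable (or entirely anti-degradable), so one must first separate the degradable from the anti-degradable factors and then argue monotonicity under removal of the latter. Justifying that removal — presumably via the data-processing / cloning argument applied at the level of coherent information rather than just capacity — is the step where care is needed; everything else is a direct assembly of the quoted results.
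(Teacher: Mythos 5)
Your overall architecture is the same as the paper's: start from Eq.~\eqref{eq:QHtens} of Theorem~\ref{QH+QHtens}, invoke the Wolf/Perez-Garc\'{\i}a dichotomy (Lemma~\ref{criterionDAD}) so that every factor $\mathcal{N}_{\eta_i}$ is degradable or anti-degradable, apply Devetak--Shor additivity to collapse the regularization, discard the anti-degradable $\eta$, and use concavity of $I_c$ for degradable channels. The one place where you diverge is the treatment of ``mixed'' blocks, and that is where your write-up has a genuine gap. You propose to delete anti-degradable factors by a free-standing cloning/data-processing principle --- ``appending an anti-degradable channel cannot increase the (regularized) coherent information.'' As a statement about arbitrary partner channels this is \emph{false}: in the Smith--Yard super-activation phenomenon~\cite{SY08}, the $50\%$ erasure channel is anti-degradable, yet tensoring it with a suitable zero-capacity (PPT) channel produces strictly positive coherent information. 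So no argument about the anti-degradable factor alone can justify the removal; the degradability of the surviving factors is essential, and the ``monotonicity under removal'' you flag as the step needing care is not a lemma you can obtain by the route you sketch.

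The resolution is that no separate removal step is needed at all: the Devetak--Shor additivity as quoted in the paper (``for degradable or anti-degradable $\mathcal{N}_i$'') is a per-factor statement that covers heterogeneous tensor products in which each factor is degradable \emph{or} anti-degradable. Applied directly, it gives
$\max_{\rho^{(n)}} I_c\bigl(\rho^{(n)};\mathcal{N}_{\eta_1}\ox\cdots\ox\mathcal{N}_{\eta_n}\bigr) = \sum_{i} \max_{\rho_i} I_c(\rho_i;\mathcal{N}_{\eta_i})$,
and each anti-degradable factor contributes exactly $0$ (not merely nonpositive: pure inputs always yield $I_c=0$, while anti-degradability caps the maximum at $0$). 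This single application is the paper's entire proof of the capacity formula; your separation-and-removal scheme is unnecessary, and in the form you justify it, unsound. The remaining ingredients of your proposal --- the $n=1$ lower bound being immediate from \eqref{eq:QHtens}, the restriction to degradable $\eta$ because anti-degradable ones give $0$, and the Devetak--Shor concavity of $\rho\mapsto I_c(\rho;\mathcal{N}_\eta)$ yielding the convex-optimization structure --- match the paper exactly.
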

\begin{proof}
The capacity in general is given by Theorem~\ref{QH+QHtens}, Eq.~(\ref{eq:QHtens}):
\[
  Q_{H\ox}(V) = \sup_{n} \max_{\eta_1\ox\cdots\ox\eta_n} \max_{\rho^{(n)}}
       \frac{1}{n} I_c\bigl( \rho^{(n)};\mathcal{N}_{\eta_1}\ox\cdots\ox\mathcal{N}_{\eta_n} \bigr).
\]
By Wolf and Perez-Garcia's Lemma~\ref{criterionDAD}, each of the $\mathcal{N}_{\eta_i}$
is degradable or anti-degradable, so by Devetak and Shor~\cite{DS05}, the coherent
information is additive:
\[
  \max_{\rho^{(n)}} I_c\bigl(\rho^{(n)};\mathcal{N}_{\eta_1}\ox\cdots\mathcal{N}_{\eta_n}\bigr)
    = \sum_{i=1}^n \max_{\rho_i} I_c(\rho_i;\mathcal{N}_{\eta_i}),
\]
hence $Q_{H\ox}(V) = \max_\eta \max_\rho I_c(\rho;\mathcal{N}_\eta)$ as 
advertised.

Clearly, for those $\eta$ such that $\mathcal{N}_\eta$ is anti-degradable,
we know that the r.h.s.~is $0$, so we may discount them in the optimization.
\end{proof}

\medskip
\begin{definition}
  We say that a unitary operator $U$ to be \emph{universally degradable} 
  (resp. \emph{anti-degradable}), if for every $\ket{\eta} \in E$, 
  the qubit channel $\mathcal{N}_{\eta}:\mathcal{L}(A) \rightarrow \mathcal{L}(B)$ 
  is degradable (resp. anti-degradable).
  The set of universally degradable (anti-degradable) unitaries 
  is denoted $\mathfrak{D}$ ($\mathfrak{A}$).
\end{definition}

\medskip
Clearly, $\SWAP \in \mathfrak{A}$ and $\id\in\mathfrak{D}$, 
hence both $\mathfrak{A}$ and $\mathfrak{D}$ are non-empty.
Furthermore, $U\in\mathfrak{D}$ if and only if $\SWAP\cdot U\in\mathfrak{A}$.
Indeed, the set 
$\{(\alpha_x,\alpha_y,\alpha_z) \in \mathfrak{T}: U(\alpha_x,\alpha_y,\alpha_z) \in \mathfrak{A}\}$ 
is a tetrahedron with vertices $(\frac{\pi}{4},\frac{\pi}{4},\frac{\pi}{4})$,
$(\frac{\pi}{2},\frac{\pi}{4}$, $\frac{\pi}{4}),(\frac{\pi}{2},\frac{\pi}{2},0)$
and $(\frac{\pi}{2},\frac{\pi}{2},\frac{\pi}{2})$, shown in Fig.~\ref{fig:uad}.
Similarly, the set $\mathfrak{D}$ corresponds to the tetrahedron with vertices
$(0,0,0)$, $(\frac{\pi}{2},0,0)$, $(\frac{\pi}{4},\frac{\pi}{4},0)$ and
$(\frac{\pi}{4},\frac{\pi}{4},\frac{\pi}{4})$.
For a detailed analysis of the sets $\mathfrak{A}$ and $\mathfrak{D}$ and their
parameter regions we refer to Appendix~\ref{app:two-qubit}.

\begin{figure}[ht]
  \centering
  \includegraphics[width = 8cm, height = 6cm]{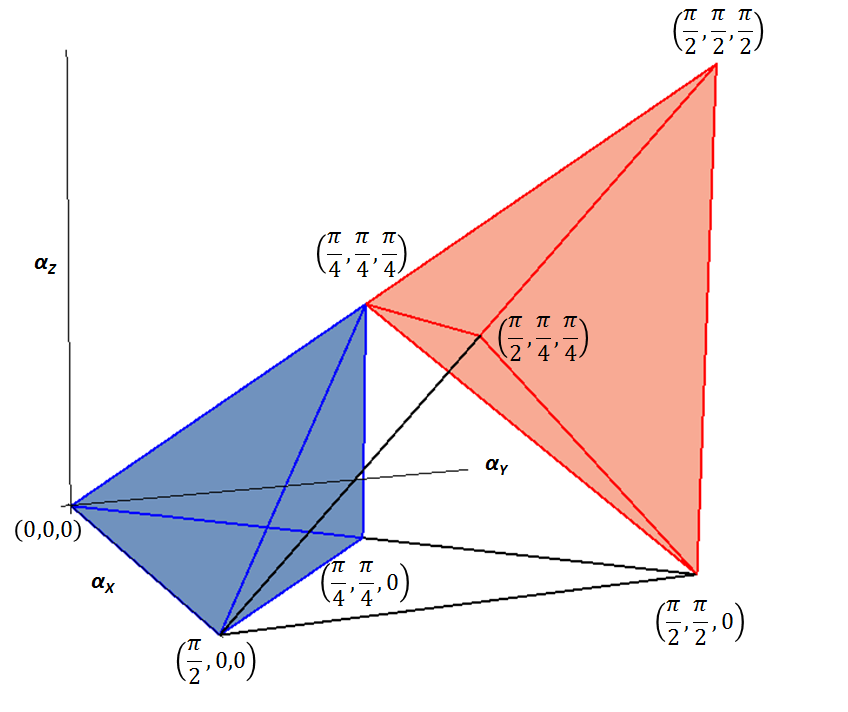}
  \caption{Universally anti-degradable and degradable regions inside the parameter 
     space $\mathfrak{T}$. The upper (red) tetrahedron corresponds to $\mathfrak{A}$,
     the lower (blue) one corresponds to $\mathfrak{D}$.}
  \label{fig:uad}
\end{figure}

Let us first consider the unique edge of the tetrahedron $\mathfrak{T}$
which contains points either belonging to $\mathfrak{A}$ or $\mathfrak{D}$.
This is the line segment joining the identity $\1$ $(0,0,0)$ with
SWAP $(\frac{\pi}{2},\frac{\pi}{2},\frac{\pi}{2})$.  
Each unitary on that line is a $\gamma$-th root of 
SWAP with a parameter $\gamma \in (0,1)$, i.e. 
\begin{align}
  \SWAP^{\gamma} &=      \frac{1 + e^{i\pi \gamma}}{2}\1 + \frac{1 - e^{i\pi \gamma}}{2}\SWAP \nonumber\\
                 &\equiv U\left( \frac{\gamma\pi}{2}, \frac{\gamma\pi}{2},\frac{\gamma\pi}{2}\right).
  \label{eq:gamma}
\end{align}

\medskip
It is actually elementary to evaluate the universally anti-degradable region 
of this line segment. 
Due to the invariance of $\SWAP$ under conjugation with unitaries of the form $u\ox u$,
it is enough to examine the anti-degradability of the channel that arise when the 
initial state of the environment is $\ket{0}$: either all $\mathcal{N}_\eta$ are
anti-degradable or none.
The Kraus operators are 
\begin{equation*}
\left[\begin{array}{cc}
1 & 0  \\
0 & \frac{1 + e^{i\pi \gamma}}{2} \\
\end{array}\right],\quad
\left[\begin{array}{cc}
0 & \frac{1 - e^{i\pi \gamma}}{2} \\
0 & 0 \\
\end{array}\right],
\end{equation*}
making it a generalized amplitude damping 
channel with damping parameter $\frac{1+e^{i\pi\gamma}}{2}$. 

\medskip
Hence we can invoke the criterion of Lemma~\ref{criterionDAD}, 
as these Kraus operators are in normal form.
It results that $\mathcal{N}_{\proj{0}}$ is anti-degradable for $\gamma \in [\frac12,1]$,
i.e.~$U\left( \frac{\gamma\pi}{2}, \frac{\gamma\pi}{2},\frac{\gamma\pi}{2}\right) \in \mathfrak{A}$.

From the above arguments it follows that
$Q_{H\ox}\bigl(U\left( \frac{\gamma\pi}{2}, \frac{\gamma\pi}{2},\frac{\gamma\pi}{2}\right)\bigr) = 0$ 
for $\gamma \in [\frac12,1]$. We do not know whether it is even true that
$Q_H(\SWAP^\gamma) = 0$ for these values of $\gamma$, which would require to show
that $Q_{H\ox}\bigl( (\SWAP^\gamma)^{\ox n} \bigr) = 0$ for all integers $n$.


\section{Super-Activation}
\label{sec:CA}
The significance of $U \in \mathfrak{A}$ is that a Helen restricted to $n$-separable
environment states cannot help Alice to communicate quantum information to Bob,
$Q_{H \ox}(U) = 0$, in accordance with Theorem~\ref{thm:2-qubit-Q-Htens}. 
The natural question now arising is whether an unrestricted Helen can perform any better. 
In this section we show that this can indeed be the case.

\subsection{Two different unitaries}
\label{subsec:SA}
The edges of the universally 
anti-degradable tetrahedron (Fig.~\ref{fig:uad}) 
provide examples of super-activation
($Q_{H\ox}(W) = Q_{H\ox}(V) = 0$ and $Q_{H\ox}(W \ox V) > 0$). 
These are discussed below by referring to the setting and notation of  
Fig.~\ref{fig:super}. 
The input state we will consider below in all the further analysis, unless mentioned otherwise,
shall be
$\vert 0 \rangle^{A'} \otimes \vert\Phi\rangle^{E' E} \otimes \vert\Phi\rangle^{A R}$,
where $\vert\Phi\rangle$ is the two-qubit maximally entangled state.

The global unitary $G$ is given by $W \ox V \ox \1_R$, so that
the coherent information is given by $S(\rho^{B' B}) - S(\rho^{F' F})$,
where $\rho^{B' B} = \Tr_{F' F R} G\proj{\Psi} G^{\dag}$ and 
$\rho^{F' F} = \Tr_{B' B R} G\proj{\Psi} G^{\dag}$ are the output states 
of Bob and Eve, respectively.

\begin{figure}[ht]
\centering
\begin{tikzpicture}[scale=0.4]
\draw[thick, green] (0,0) -- (1,0);
\draw[thick, green](1,0) -- (3,2) -- (7,2);
\draw[thick, green](1,0) -- (3,-2) -- (7,-2);
\draw[thick, red](3,4)--(7,4); \draw[thick,red](1,-6) -- (3,-4) -- (7,-4);
\draw[ultra thick] (7,1) rectangle (11,5); \draw[ultra thick] (7,-5) rectangle (11,-1);
\draw[thick, blue] (11,4) -- (17,4);
\draw[thick, blue] (11,-4) -- (17,-4);
\draw[thick, purple](11,2) -- (14,2); \draw[thick, purple](11,-2) -- (14,-2);
\draw[thick,purple](1,-6) -- (3,-8) -- (14,-8);
\node[cloud, cloud puffs=15.7, cloud ignores aspect, minimum width=0.7cm, minimum height=2cm, align=center, draw, fill=purple!50] (cloud) at (14,0){}; 
\node[cloud, cloud puffs=15.7, cloud ignores aspect, minimum width=0.3cm, minimum height=1cm, align=center, draw, fill=purple!50] (cloud) at (14,-8){} ; 
\draw[fill = blue!50] (17,0) ellipse (0.5 and 4.4);
\node[left] at (0,0){$H$};
\node[above] at (5,4){$A^{'}$}; \node[below] at (5,-4){$A$}; \node[above] at (9,-8){$R$};
\node[below] at (5,2){$E^{'}$}; \node[above] at (5,-2){$E$};
\node[below] at (12.8,2){$F^{'}$}; \node[above] at (12.8,-2){$F$};
\node[above] at (14,4){$B^{'}$}; \node[below] at (14,-4){$B$};
\draw (9,3) node[font = \fontsize{40}{42}\sffamily\bfseries]{$W$};
\draw (9,-3) node[font = \fontsize{40}{42}\sffamily\bfseries]{$V$};
\end{tikzpicture}
  \caption{The inputs controlled by Alice are  $A'$ and $A$, $R$ is the purification of 
           $A$. Helen controls $E'$ and $E$, Bob's systems are labelled as $B'$ and $B$. 
           The inaccessible output-environment systems are labelled as $F'$ and $F$. 
           Alice inputs $\vert 0 \rangle$ in $A'$ and $\ket{\Phi}$ in $AR$. Helen inputs 
           a Bell state $\ket{\Phi}$ in $E' E$.}
  \label{fig:super}
\end{figure}
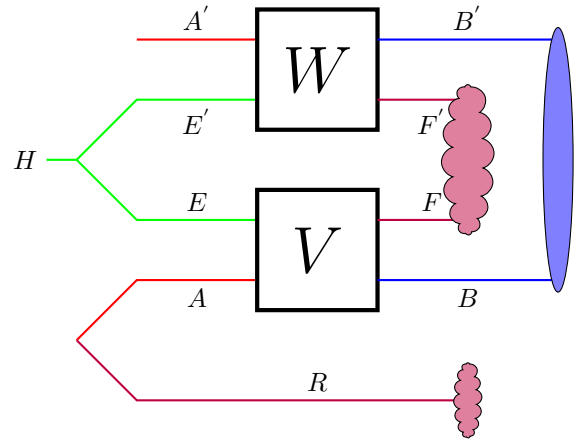

\begin{enumerate}[label={\emph{A-\arabic{*}}}]
\item Let $W$ be a unitary on the edge joining $\SWAP$ and DCNOT, 
  i.e.~$W = U(\frac{\pi}{2},\frac{\pi}{2},\frac{t\pi}{2})$ with a parameter 
  $t \in [0, 1]$; $V = \SWAP^{\gamma}$ with $\gamma \in [0.5,1]$.
  Then $W$ has $\lambda_{1} = \frac{t\pi}{4}$, 
  $\lambda_{2}= \frac{t\pi}{4}$, $\lambda_{3} = -\frac{\pi}{4}(t + 2)$
  and $\lambda_{4} = -\frac{\pi}{4}(t -2)$. 
  Hence, $W = e^{\frac{it\pi}{4}} \tilde{U}$ where
  \begin{equation*}
  \tilde{U} = 
  \left[\begin{array}{cccc}
    e^{-\frac{it\pi}{2}} & 0 & 0 & 0 \\
    0 &  0 & -i & 0 \\
    0 & -i & 0 & 0 \\
    0 & 0 & 0 & e^{-\frac{it\pi}{2}}
  \end{array}\right],
  \end{equation*}
  written in the computational basis.
  Bob's output state is then given by
  \[\begin{split}
    \rho^{B' B}
      &= \frac{1}{4}\left[ \frac{3 - \cos \pi\gamma}{2} \left(\proj{00} + \proj{11} \right) \right.   \\
      &\phantom{===}
       + i e^{-\frac{it\pi}{2}} (1- \cos \pi\gamma) \ket{00}\!\bra{11} \\
      &\phantom{===}
       - i e^{\frac{it\pi}{2}} (1- \cos \pi\gamma) \ket{11}\!\bra{00} \\
      &\phantom{===}
       + \left. \frac{1 + \cos \pi\gamma}{2}\bigl( \ket{01}\!\bra{01} + \ket{10}\!\bra{10} \bigr)\right],
  \end{split}\] 
  whose eigenvalues are
  $\frac{5-3\cos \pi\gamma}{8}$ (single) and $\frac{1+ \cos \pi\gamma}{8}$ (triple),
  while $\rho^{F' F} = \proj{0}^{F'} \ox \frac{1}{2} \1^F$.
  The coherent information vanishes at $\gamma^* \approx 0.6649$, see Fig.~\ref{fig:act1}. 
  Hence each unitary $U(\frac{\pi}{2},\frac{\pi}{2},\frac{t\pi}{2})$ with $t \in [0, 1]$ 
  super-activates $\SWAP^{\gamma}$ for $\gamma \in [0.5,\gamma^*)$.

  \begin{figure}[ht]
    \centering
    \includegraphics[height = 5cm, width = 8cm]{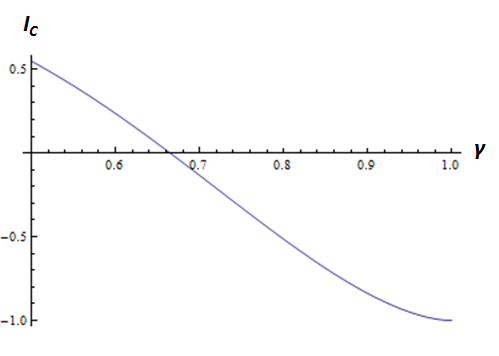}
    \caption{Example A-1: Plot of the coherent information $I_c = S(B' B) - S(F' F)$ when 
      $W=U(\frac{\pi}{2},\frac{\pi}{2},\frac{t\pi}{2})$ and
      $V= \SWAP^{\gamma}$, over $\gamma \in [0.5,1]$.}
    \label{fig:act1}
  \end{figure} 

\item Let $W = \SWAP$ and 
  $V = (\frac{\pi}{4} + \frac{t \pi}{4},\frac{\pi}{4},\frac{\pi}{4})$ with $t \in [0,1]$. 
  Here, $V$ sits on the edge joining $\sqrt{\SWAP}$ to $U(\frac{\pi}{2},\frac{\pi}{4},\frac{\pi}{4})$.
  The coherent information is positive for $t \in [0,1]$ as depicted in Fig.~\ref{fig:act2}.

  \begin{figure}[ht]
    \centering
    \includegraphics[height = 5cm, width = 8cm]{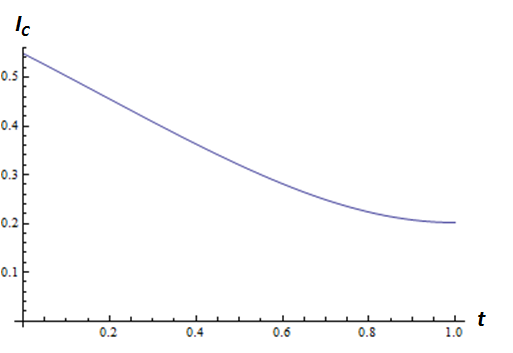}
    \caption{Example A-2: Plot of the coherent information $I_c = S(B' B) - S(F' F)$ when $W=\SWAP$ and 
             $V= (\frac{\pi}{4} + \frac{t \pi}{4},\frac{\pi}{4},\frac{\pi}{4})$, over $t \in [0,1]$.}
    \label{fig:act2}
  \end{figure} 

\item $\sqrt{\SWAP}$ activates $U(\frac{\pi}{2},\frac{\pi}{2},\frac{t\pi}{2})$ 
  for $t \in [0,1]$ as shown in example A-1. The coherent information is given by the 
  curve $s$ in Fig.~\ref{fig:act3}. Here let us evaluate the coherent information 
  for the setting described in Fig.~\ref{fig:super}, when we have 
  $V = \sqrt{\SWAP}$ and $W$ is a unitary on the edges of the tetrahedron corresponding
  to $\mathfrak{A}$. 
  By varying the parameter $t$ from $[0,1]$ we move along one of the edges of $\mathfrak{A}$. 
  \begin{enumerate}
    \item The edge joining $\sqrt{SWAP}$ to DCNOT:  
      $W = U(\frac{\pi}{4}+\frac{t\pi}{4},\frac{\pi}{4}+\frac{t\pi}{4},\frac{\pi}{4}-\frac{t\pi}{4})$. 
      The coherent information is given by the curve $p$ in Fig.~\ref{fig:act3}, 
      which is positive for $t \in (0,1]$.
    \item The edge joining $\sqrt{\SWAP}$ to $\SWAP$ ($\SWAP^\gamma$): 
      $W = U(\frac{\pi}{4} + \frac{t \pi}{4},\frac{\pi}{4} +  \frac{t \pi}{4},\frac{\pi}{4} + \frac{t \pi}{4})$.
      The coherent information is given by the curve $p$ in Fig.~\ref{fig:act3}, 
      which is positive for $t \in (0,1]$. 
      Here $t = 2\gamma - 1$, and the coherent information is 
      positive for $\gamma \in (\frac{1}{2},1]$.
    \item The edge joining $U(\frac{\pi}{2} ,\frac{\pi}{4},\frac{\pi}{4} )$ to $\SWAP$:  
      $W = U(\frac{\pi}{2} ,\frac{\pi}{4} +  \frac{t \pi}{4},\frac{\pi}{4} + \frac{t \pi}{4})$.
      The coherent information is given by the curve $q$ in Fig.~\ref{fig:act3}, which 
      is positive for $t \in [0,1]$.
    \item The edge joining $U(\frac{\pi}{2} ,\frac{\pi}{4},\frac{\pi}{4} )$ to DCNOT: 
      $W = U(\frac{\pi}{2} ,\frac{\pi}{4} +  \frac{t \pi}{4},\frac{\pi}{4} - \frac{t \pi}{4})$. 
      The coherent information is given by the curve $q$ in Fig.~\ref{fig:act3}, 
      which is positive for $t \in [0,1]$.
    \item The edge joining $\sqrt{\SWAP}$ to $U(\frac{\pi}{2} ,\frac{\pi}{4},\frac{\pi}{4} )$: 
      $W = U(\frac{\pi}{4} + \frac{t \pi}{4},\frac{\pi}{4} ,  \frac{\pi}{4})$. 
      The coherent information is given by the curve $r$ in Fig.~\ref{fig:act3},
      which is positive for $t \in (0,1]$.
  \end{enumerate}

  It results that each unitary corresponding to a point on the edge of the
  tetrahedron $\mathfrak{A}$ is super-activated by some 
  another $V \in \mathfrak{A}$. Actually a single unitary, $V = \sqrt{\SWAP}$, super-activates  
  every other unitary on the edges of the universally anti-degradable tetrahedron 
  (except itself).
  Furthermore, from the numerical analysis we have that $V = \sqrt{\SWAP}$ super-activates 
  every $W \in \mathfrak{A}$ (except itself). 

  \begin{figure}[ht]
    \centering
    \includegraphics[height = 5cm, width = 8cm]{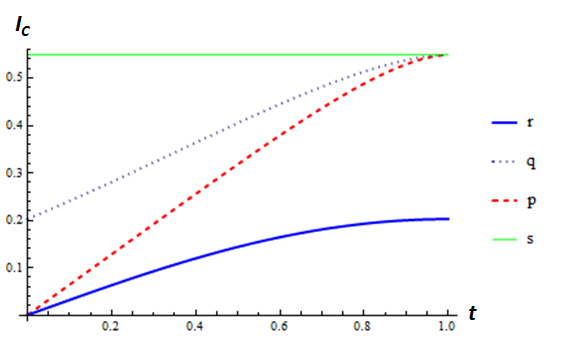}
    \caption{Plots of the coherent information when $V=\sqrt{\SWAP}$ and $W$ is on one of the edges 
    of the tetrahedron $\mathfrak{A}$, examples A-3a through A-3e.}
    \label{fig:act3}
  \end{figure} 
\end{enumerate}

Thus, in all the above cases,
\[
  Q_{H\ox}(W \otimes V) > Q_{H\ox}(W) + Q_{H\ox}(V) = 0.
\]
In other words, two seemingly useless unitaries can transfer a positive rate of
quantum information when used in conjunction and the input environments are entangled. 
All the above $W$ and $V$ show superactivation of $Q_{H\ox}$. In addition,
in the examples A-1 and A-2, we have $W=\SWAP$, hence in fact even $Q_H(W)=0$.
In particular the roots $\SWAP^\gamma$ of the SWAP gate are interesting. When  
$\sqrt{\SWAP}$ is used in conjunction with a different $W \in \mathfrak{A}$ and 
the input environments are entangled, then they could transfer positive quantum 
information i.e. $Q_{H\ox}(\sqrt{\SWAP} \ox W) > 0$.


\subsection{Self-super-activation}
\label{subsec:SSA}
So far we have considered two different unitaries. The question is if two copies of 
the same unitary ($\in \mathfrak{A}$) can yield positive capacity
when the initial states environments are entangled? In other words,
can $Q_{H\ox}$ be \emph{self-super-activated}?
The answer to this question is affirmative as we shall show now. 

\medskip
\begin{remark}
  From the super-activation of a unitary $W$ with another unitary $V$,
  such that both $W$ and $V$ are universally anti-degradable, we can 
  get a self-super-activating unitary by doubling the size of the environment. 
  More precisely, we can construct the new unitary 
  $R : A\ox E\ox{E'} \rightarrow B\ox F\ox {F'}$, with $E'=F'=\CC^2$: 
  $R = W^{AE}\ox\proj{0}^{E'} + V^{AE}\ox\proj{1}^{E'}$.
  
  To see that this works, clearly if Helen inputs $\ket{0}$ into $E'$,
  she determines that the unitary on $AE$ is $W$, if she inputs $\ket{1}$
  into $E'$, the unitary is $V$; hence from two uses, $R\ox R$, she can 
  get $W\ox V$, which has positive environment-assisted capacity by
  assumption. On the other hand $Q_{H\ox}(R)=0$, because in fact $R$
  is itself universally anti-degradable.
  Namely, observe that if the channels induced by $W$ and $V$ for
  environment input states $\psi$ and $\varphi$ are denoted by
  $\mathcal{N}_\psi$ and $\mathcal{M}_\varphi$, respectively, then a generic input
  state $\sqrt{p}\ket{\psi}\ket{0} + \sqrt{1-p}\ket{\varphi}\ket{1}$
  to the $EE'$ registers of $W$ results in the channel
  $p\mathcal{N}_\psi + (1-p)\mathcal{M}_\varphi$. As both components are
  anti-degradable, so is their convex combination.
\end{remark}

\medskip
However, by looking at our two-qubit classification more carefully, we can also
find self-super-activation in this simplest possible setting.

\begin{enumerate}[label={\emph{B-\arabic{*}}}]
\item Let us consider the unitaries
  $U\left(\frac{\pi}{4} + t \frac{\pi}{4},\frac{\pi}{4} 
          + t \frac{\pi}{4},\frac{\pi}{4} - t \frac{\pi}{4}\right)$  with $t \in [0,1]$. 
  We have seen in example A-3a that these unitaries are activated by 
  $\sqrt{\SWAP}$ in $t \in (0,1]$. Now we shall explore the case when 
  $W = V = U\left(\frac{\pi}{4} + t \frac{\pi}{4},\frac{\pi}{4} 
          + t \frac{\pi}{4},\frac{\pi}{4} - t \frac{\pi}{4}\right)$.
  The coherent information $S(BB') - S(FF')$ is positive for $t \in (0,1)$ 
  as shown by curve $m$ in Fig.~\ref{fig:self1}.
  
  \begin{figure}[ht]
    \centering
    \includegraphics[height = 5cm, width = 8cm]{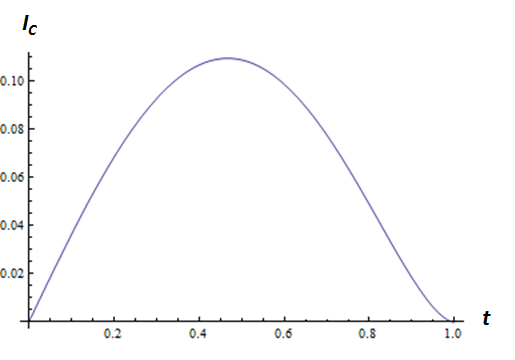}
    \caption{Example B-1: Plot of the coherent information for the family
     $W = V = 
      U\left(\frac{\pi}{4} + \frac{t \pi}{4},\frac{\pi}{4} + \frac{t \pi}{4},\frac{\pi}{4} - \frac{t \pi}{4}\right)$; 
     it is positive for $t \in (0,1)$}
    \label{fig:self1}
  \end{figure} 
  
  When Helen can create quantum correlation between the environment inputs we 
  see that a seemingly ``useless'' unitary can transmit quantum information. That 
  is, the unrestricted Helen can super-activate the interaction
  $U\left(\frac{\pi}{4} + t \frac{\pi}{4},\frac{\pi}{4} 
          + t \frac{\pi}{4},\frac{\pi}{4} - t \frac{\pi}{4}\right)$, 
  with $t \in (0,1)$ which translates to 
  \begin{align}
    Q_H&\left(U\left(\frac{\pi}{4} + t \frac{\pi}{4},\frac{\pi}{4} 
                     + t \frac{\pi}{4},\frac{\pi}{4} - t \frac{\pi}{4}\right)\right) \nonumber\\
       &\!\!\!\!\!\!\!\!
        > Q_{H \ox}\left(U\left(\frac{\pi}{4} + t \frac{\pi}{4},\frac{\pi}{4} 
                        + t \frac{\pi}{4},\frac{\pi}{4} - t \frac{\pi}{4}\right)\right) = 0.
  \end{align}

\item We can provide another family of unitaries which exhibit self-super-activation by the
  unitaries
  $W=V = U\left(\frac{\pi}{2} ,\frac{\pi}{4} + t \frac{\pi}{4},\frac{\pi}{4} - t \frac{\pi}{4}\right)$.
  The environment input state is 
  $\vert \Psi \rangle_{\theta} 
   = \ket{1}^{A'} \ox \ket{\Phi}^{E'E} \ox \left(\sqrt{\theta}\ket{00} + \sqrt{1-\theta}\ket{11}\right)^{AR}$.
  By optimizing over $\theta$, we numerically find positive coherent information for
  $t \in (0.0004,0.9999)$. The plots in in Fig.~\ref{fig:self2} show $\theta = \frac12$ (curve $m$),
  $\theta = 2^{-6}$ (curve $n$) and $\theta=2^{-10}$ (curve $o$).
  The coherent information achievable seems to get smaller and smaller as $t$ 
  approaches $0$.
%
  
  \begin{figure}[ht]
    \centering
    \includegraphics[height = 5cm, width = 8cm]{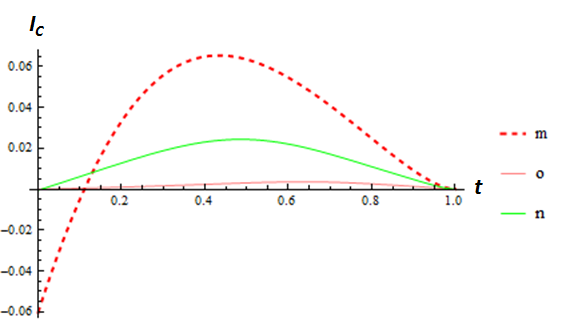}
      \caption{Example B-2: Plots of the coherent information for the family
       $W=V = U\left(\frac{\pi}{2} ,\frac{\pi}{4} + t \frac{\pi}{4},\frac{\pi}{4} - t \frac{\pi}{4}\right)$.
       Curves $m$, $n$, $o$ correspond to input states $\vert \Psi\rangle_\theta$ with
       $\theta = \frac12,2^{-6},2^{-10}$, respectively.}
    \label{fig:self2}
  \end{figure} 
\end{enumerate}

For all the above $U$, $Q_{H\ox}(U)=0$ but $Q_H(U)>0$, showing that
to unlock the full potential of an interaction $U$, the helper may
need to entangle the environments of different instances of $U$.

\medskip
\begin{remark}
  The phenomenon of self-super-activation taking place thanks to entanglement 
  across environments resembles the super-additivity of the capacity
  in quantum channels with memory~\cite{DBF07,LPM09}.
\end{remark}


\section{Entanglement-assisted helper}
\label{sec:entangled}
Entanglement played a pivotal role in the instances of superactivation 
exhibited above;
when Helen could create correlation between the environment input 
registers, she could enhance quantum communications from Alice to Bob. 
In this section 
we consider the model when there is pre-shared entanglement between 
Helen and Bob. This model is motivated by the equivalence of the two
schemes presented in Fig.~\ref{fig:swap}.

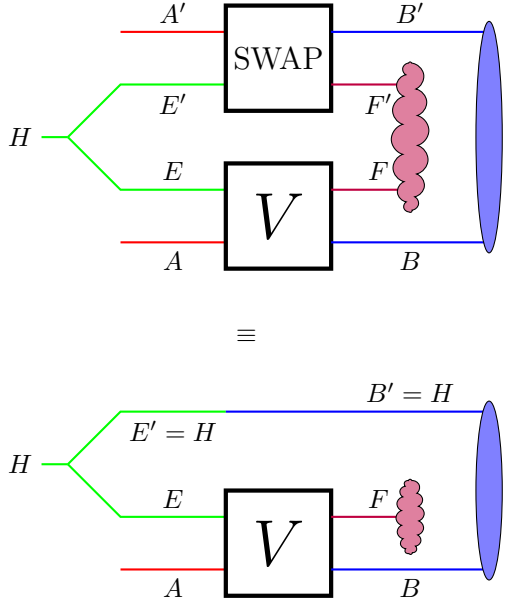
\begin{figure}[ht]
\centering
\begin{tikzpicture}[scale=0.35]
\draw[thick, green] (0,0) -- (1,0);
\draw [thick, green](1,0) -- (3,2) -- (7,2);
\draw [thick, green](1,0) -- (3,-2) -- (7,-2);
\draw [thick, red](3,4)--(7,4); \draw[thick,red] (3,-4) -- (7,-4);
\draw[ultra thick] (7,1) rectangle (11,5); \draw[ultra thick] (7,-5) rectangle (11,-1);
\draw[thick, blue] (11,4) -- (17,4);
\draw[thick, blue] (11,-4) -- (17,-4);
\draw [thick, purple](11,2) -- (14,2); \draw[thick, purple](11,-2) -- (14,-2);
\node[cloud, cloud puffs=15.7, cloud ignores aspect, minimum width=0.5cm, minimum height=2cm, align=center, draw, fill=purple!50] (cloud) at (14,0){} ; 
\draw[fill = blue!50] (17,0) ellipse (0.5 and 4.4);
\node[left] at (0,0){$H$};
\node[above] at (5,4){$A'$}; \node[below] at (5,-4){$A$};
\node[below] at (5,2){$E'$}; \node[above] at (5,-2){$E$};
\node[below] at (12.8,2){$F'$}; \node[above] at (12.8,-2){$F$};
\node[above] at (14,4){$B'$}; \node[below] at (14,-4){$B$};
\draw (9,3) node[font = \fontsize{13}{14}\sffamily\bfseries]{$\SWAP$};
\draw (9,-3) node[font = \fontsize{40}{42}\sffamily\bfseries]{$V$};
\end{tikzpicture}

\bigskip
\centering{$\equiv$}
\bigskip

\begin{tikzpicture}[scale=0.35]
\draw[thick, green] (0,0) -- (1,0);
\draw [thick, green](1,0) -- (3,2) -- (7,2);
\draw [thick, green](1,0) -- (3,-2) -- (7,-2);
 \draw[thick,red] (3,-4) -- (7,-4);
\draw[ultra thick] (7,-5) rectangle (11,-1);
\draw[thick, blue] (11,-4) -- (17,-4);\draw[thick,blue] (7,2) --(17,2);
 \draw[thick, purple](11,-2) -- (14,-2);
\node[cloud, cloud puffs=15.7, cloud ignores aspect, minimum width=0.3cm, minimum height=1cm, align=center, draw, fill=purple!50] (cloud) at (14,-2){} ; 
\draw[fill = blue!50] (17,-1) ellipse (0.5 and 3.4);
\node[left] at (0,0){$H$};
 \node[below] at (5,-4){$A$};
\node[below] at (5,2){$E'=H$}; \node[above] at (5,-2){$E$};
 \node[above] at (12.8,-2){$F$};
 \node[below] at (14,-4){$B$};\node[above] at (14,2){$B'=H$};
\draw (9,-3) node[font = \fontsize{40}{42}\sffamily\bfseries]{$V$};
\end{tikzpicture}
\caption{When inputting an entangled state across $E'E$ and an arbitrary state
  in $A'$ (top), the $\SWAP$ acts like a ``dummy'' channel but helps to establish 
  entanglement between the receiver $BB'$ and the environment $E$. 
  This is equivalent to sharing an entangled state between Helen and Bob (bottom).}
\label{fig:swap}
\end{figure}

$\SWAP$ merely exchanges the input and environment registers, which could 
be used to correlate the environment on the input side with the receiver 
when the initial environment states are entangled. Indeed, this was behind
several of the examples of super-activation in the previous section
(A-1 and A-2).

Extending the notation of $\mathcal{N}_\eta = \mathcal{N}(\cdot \ox \eta)$
introduced in Section~\ref{sec:main}, we let, for a state $\kappa$ on $EH$,
\[
  \mathcal{N}_\kappa^{A \rightarrow BH}(\rho) 
         := (\mathcal{N}^{AE \rightarrow B}\ox\id_H)(\rho^A\ox\kappa^{EH}).
\]

Referring to Fig.~\ref{fig:entast}, we can further define the following CPTP maps. 
An encoding map $\mathcal{E}:\mathcal{L}(A_{0}) \rightarrow \mathcal{L}(A^n)$, and 
the decoding map $\mathcal{D}:\mathcal{L}(B^n \ox H) \rightarrow \mathcal{L}(B_0)$. 
The output after the overall dynamics when we input a maximally entangled state
$\Phi^{RA_0}$, with the inaccessible reference system $R$, is given by
$\sigma^{RB_0} 
 =\mathcal{D}\left(\mathcal{N}^{\ox n} \ox \id_H 
                   \left(\mathcal{E}(\Phi ^{RA_0}) \ox \kappa^{E^n}\right) \right)$.

\medskip
\begin{definition}
  \label{def:entanglement-assisted-code}  
  An \emph{entangled environment-assisted quantum code} of block length $n$ is a 
  triple $(\mathcal{E}^{A_0 \rightarrow A^n},\kappa^{E^n H},\mathcal{D}^{B^n H \rightarrow B_0})$.
  Its \emph{fidelity} is given by $F = \tr \Phi^{RA_0}\sigma^{RB_0}$,
  and its \emph{rate} defined as $\frac{1}{n}\log|A_0|$. 
  
  A rate $R$ is called \emph{achievable} if there are codes of all block lengths
  $n$ with fidelity converging to $1$ and rate converging to $R$. The 
  \emph{entangled environment-assisted quantum capacity} of $V$, denoted
  $Q_{EH}(V)$, or equivalently $Q_{EH}(\mathcal{N})$, is the maximum achievable rate.
%
\end{definition}

\begin{figure}[ht]
\centering
\begin{tikzpicture}[scale=0.28]
\draw [ultra thick] (5,-11) rectangle (10,-1) ;
\draw [ultra thick] (18,-14) rectangle (23,-1) ;
\draw [ultra thick] (13,-3) rectangle (15,-1) node[midway]{$\mathcal{N}$};
\draw [ultra thick] (13,-6) rectangle (15,-4) node[midway]{$\mathcal{N}$};
\draw [ultra thick] (13,-11) rectangle (15,-9) node[midway]{$\mathcal{N}$};
\draw [thick, red] (0,0)   -- (3,-6) -- (5,-6);
\draw [thick, red] (10,-1.5) -- (13,-1.5);
\draw [thick, red] (10,-4.5) -- (13,-4.5);
\draw [thick, red] (10,-9.5) -- (13,-9.5);
\draw [thick,blue] (15,-2) -- (18,-2);
\draw [thick, blue] (15,-5) -- (18,-5);
\draw [thick, blue] (15,-10) -- (18,-10);
\draw [thick, blue] (23,-6) -- (25,-6) -- (28,0);
\draw [thick, purple] (0,0) -- (3,6) -- (25,6) -- (28,0);
\draw [loosely dashed] (-0.5,0) -- (28.5,0);
\draw [thick, green] (11,-13) -- (11,-2.5) -- (13,-2.5);
\draw [thick, green] (11,-13) -- (11.2,-5.5) -- (13,-5.5);
\draw [thick, green] (11,-13) -- (11.4,-10.5) -- (13,-10.5);
\node[above] at (4,-6) {$A_{0}$}; \node[above] at (24,-6){$B_{0}$};\node[below] at (14,6){$R$};
\node[left] at (0,0){$\Phi$};
\node[right] at (28,0){$\sigma$};
\node[above] at (10.5,-1.5){$A$};\node[above] at (10.5,-4.5){$A$};\node[above] at (10.5,-9.5){$A$};
\node[below] at (12.3,-2.5){$E$};\node[below] at (12.3,-5.5){$E$};\node[below] at (12.3,-10.5){$E$};\node[above] at (17,-2){$B$};\node[above] at (17,-5){$B$};\node[above] at (17,-10){$B$};
\node[below] at (11,-13){$\kappa$};
\draw (7.5,-6) node[font = \fontsize{40}{42}\sffamily\bfseries]{$\mathcal{E}$};
\draw (20.5,-7.5) node[font = \fontsize{40}{42}\sffamily\bfseries]{$\mathcal{D}$};
\draw [thick,dotted] (10.5,-6.5) -- (10.5,-7.5);
\draw[thick,dotted] (12.3,-7.5) -- (12.3,-8.5);
\draw[thick,dotted] (14,-7) -- (14,-8);
\draw[thick,dotted] (17,-7) -- (17,-8);
\draw[thick, blue] (11,-13) -- (18,-13);
\node[below] at (14.5,-13){$H$};
\end{tikzpicture}
\caption{The general form of a protocol to transmit quantum information when the helper 
         and the receiver pre-share entanglement; $\mathcal{E}$ and $\mathcal{D}$ are the encoding 
         and decoding maps respectively, $\kappa$ is the initial state of the environments and system $H$.}
\label{fig:entast}
\end{figure}
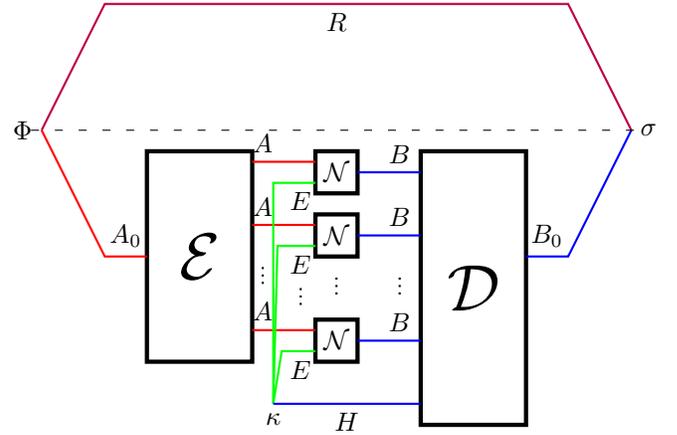

\medskip
\begin{theorem}
  \label{QEH}
  The entangled environment-assisted quantum capacity of an interaction 
  $V:AE \longrightarrow BF$ is characterized by following regularization.
  \begin{equation}\begin{split}
    \label{eq:QEH}
    Q_{EH}(V) &= \sup_n \max_{\ket{\kappa^{(n)}}\in E^nH} 
                  \frac1n Q\bigl( (\mathcal{N}^{\ox n})_{\kappa^{(n)}} \bigr) \\
              &= \sup_n \max_{\ket{\kappa}\in E^nH} \max_{\rho^{(n)}}
                  \frac1n I_c\bigl( \rho^{(n)};(\mathcal{N}^{\ox n})_{\kappa^{(n)}} \bigr).
  \end{split}\end{equation}
  The maximization is over (w.l.o.g.~pure) states $\kappa^{(n)}$ on $E^n H$ and
  input states $\rho^{(n)}$ on $A^n$.
\end{theorem}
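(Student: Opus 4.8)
The plan is to prove Theorem~\ref{QEH} by an almost verbatim adaptation of the proof of Theorem~\ref{QH+QHtens}, the only genuinely new ingredient being the correct bookkeeping of the pre-shared register $H$. The crucial observation, which formalizes the $\SWAP$-equivalence picture of Fig.~\ref{fig:swap}, is this: once Helen fixes a (pure) state $\kappa^{(n)}$ on $E^n H$ and hands the $H$-part over to Bob, she has effectively prepared a channel $(\mathcal{N}^{\ox n})_{\kappa^{(n)}}:\mathcal{L}(A^n)\rightarrow\mathcal{L}(B^n H)$ with the shared system \emph{appended to Bob's output}. From this viewpoint the entangled-helper problem is nothing but the ordinary quantum communication problem for this augmented channel, and both directions reduce to applying the standard coding theorems to it.

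For the direct part (``$\geq$''), I would fix $n$ and a pure $\kappa^{(n)}$ and apply the Lloyd-Shor-Devetak theorem~\cite{Lloyd96,Shor00,Devetak03} to the block channel $(\mathcal{N}^{\ox n})_{\kappa^{(n)}}:A^n\rightarrow B^n H$: taking asymptotically many i.i.d.\ copies of this channel, so that the i.i.d.\ capacity theorem applies (cf.~\cite{Wilde11}), achieves the rate $\frac1n Q\bigl((\mathcal{N}^{\ox n})_{\kappa^{(n)}}\bigr)$. Operationally this is legitimate under Definition~\ref{def:entanglement-assisted-code}, since Helen simply prepares one independent copy of $\kappa^{(n)}$ per block and Bob retains all the associated $H$-registers to feed into his decoder.

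For the converse (``$\leq$'') I would run the Schumacher-Nielsen-Barnum argument. Given a code of block length $n$ and fidelity $F$ using $\kappa^{(n)}$, the estimate $\frac12\|\sigma-\Phi\|_1\leq\sqrt{1-F}=:\epsilon$ together with Fannes' inequality~\cite{Fannes} yields the lower bound $I(R\rangle B_0)_\sigma \geq (1-2\epsilon)\log|A_0| - 1$. For the matching upper bound I would use that Bob's decoder is a CPTP map $\mathcal{D}:B^n H\rightarrow B_0$, so that data processing of the coherent information gives $I(R\rangle B_0)_\sigma \leq I(R\rangle B^n H)_{(\id\ox(\mathcal{N}^{\ox n})_{\kappa^{(n)}})\omega}$ with $\omega=(\id\ox\mathcal{E})\Phi$, and then convexity of the coherent information in the input~\cite{NS96} bounds this by $\max_{\rho^{(n)}} I_c\bigl(\rho^{(n)};(\mathcal{N}^{\ox n})_{\kappa^{(n)}}\bigr)$. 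Letting $n\rightarrow\infty$ and $F\rightarrow1$ delivers the upper bound on the rate. The only point demanding care here is that $H$ be consistently counted as part of Bob's output, so that the relevant quantity is $I(R\rangle B^n H)$ and not $I(R\rangle B^n)$.

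The restriction to pure $\kappa^{(n)}$ is without loss of generality: a mixed $\kappa$ on $E^n H$ can be purified on $E^n H H'$, and since Bob may simply hold the extra register $H'$ (enlarging $H$), allowing pure states can only increase the achievable rate while never excluding the mixed case, which Bob recovers by tracing out $H'$. Finally, the equivalence of the two displayed expressions follows exactly as in Theorem~\ref{QH+QHtens}: writing $Q\bigl((\mathcal{N}^{\ox n})_{\kappa^{(n)}}\bigr)=\sup_m\frac1m\max_{\rho^{(m)}}I_c\bigl(\rho^{(m)};((\mathcal{N}^{\ox n})_{\kappa^{(n)}})^{\ox m}\bigr)$ and noting that $((\mathcal{N}^{\ox n})_{\kappa^{(n)}})^{\ox m}=(\mathcal{N}^{\ox nm})_{(\kappa^{(n)})^{\ox m}}$ with $(\kappa^{(n)})^{\ox m}$ a legitimate state on $E^{nm}H^m$, the regularization implicit in $Q$ is absorbed into the outer supremum over $n$ and the maximization over $\kappa$. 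I expect no deep difficulty: the main (and really the only) obstacle is the disciplined relabelling that keeps $H$ on the receiver's side throughout both directions.
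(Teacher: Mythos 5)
Your proposal is correct and follows essentially the same route as the paper's own proof: the direct part via the LSD theorem applied to the block channel $(\mathcal{N}^{\ox n})_{\kappa^{(n)}}$, and the converse via the Schumacher--Nielsen--Barnum argument with Fannes' inequality, data processing, and convexity of the coherent information, with $H$ counted as part of Bob's output throughout. Your added remarks on purifying mixed $\kappa^{(n)}$ into an enlarged $H$ and on absorbing the regularization via $((\mathcal{N}^{\ox n})_{\kappa^{(n)}})^{\ox m}=(\mathcal{N}^{\ox nm})_{(\kappa^{(n)})^{\ox m}}$ are correct elaborations of points the paper leaves implicit.
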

\begin{proof}
  The direct part, i.e.~the ``$\geq$'' inequality, follows directly from the 
  LSD theorem~\cite{Lloyd96,Shor00,Devetak03}, applied to the channel
  $(\mathcal{N}^{\ox n})_{\kappa^{(n)}}$, to be precise asymptotically many
  copies of this block-channel, so that the i.i.d.~theorems apply~\cite{Wilde11}.

  The converse (``$\leq$''), works as before in Theorem~\ref{QH+QHtens},
  following Schumacher, Nielsen and Barnum~\cite{Schu96,NS96,BNS98}: 
  Consider a code
  of block length $n$ and fidelity $F$, where the helper uses an environment
  state $\kappa^{(n)}$; otherwise we use notation as in Fig.~\ref{fig:entast}.
  We have $\frac12 \|\sigma-\Phi\|_1 \leq \sqrt{1-F} =: \epsilon$,
  cf.~\cite{Fuchs-vandeGraaf}. Now, Fannes' inequality~\cite{Fannes} 
  can be applied, at least once $2\epsilon \leq \frac1e$ (i.e.~when $F$ is 
  large enough), yielding
  \[\begin{split}
    I(R\rangle B_0)_\sigma &=    S(\sigma^{B_0}) - S(\sigma^{RB_0}) \\
                           &\geq S(\sigma^{B_0}) \\
                           &\geq S(\Phi^{A_0}) - 2\epsilon \log|B_0| - H_2(2\epsilon) \\
                           &\geq (1-2\epsilon) \log|A_0| - 1.
  \end{split}\]
  On the other hand, with $\omega = (\id\ox\mathcal{E})\Phi$,
  \[\begin{split}
    I(R\rangle B_0)_\sigma 
             &\leq I(R\rangle B^n)_{(\id\ox\mathcal{N}^{\ox n}_{\kappa^{(n)}})\omega} \\
             &\leq \max_{\ket{\phi}^{RA^n}} I(R\rangle B^n)_{(\id\ox\mathcal{N}^{\ox n}_{\kappa^{(n)}})\phi} \\
             &=    \max_{\rho^{(n)}} I_c\bigl( \rho^{(n)};(\mathcal{N}^{\ox n})_{\kappa^{(n)}} \bigr),
  \end{split}\]
  using first data processing of the coherent information and then its
  convexity in the state~\cite{NS96}.
  As $n \rightarrow \infty$ and $F\rightarrow 1$, the upper bound on the rate
  follows.
\end{proof}

\medskip
\begin{proposition}
  The entangled environment-assisted quantum capacity is continuous.
  The statement and proof are analogous to the ones in 
  Proposition~\ref{prop:asymp-continuity}, following~\cite{LS09}.
  \qed
\end{proposition}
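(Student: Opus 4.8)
The plan is to run the Leung--Smith ``hybrid'' continuity argument~\cite{LS09} on the regularized coherent-information formula~\eqref{eq:QEH} of Theorem~\ref{QEH}, exactly as was done for $Q_H$, $Q_{H\ox}$ and $Q_{J,r}$ in Proposition~\ref{prop:asymp-continuity}, and then to check that the one genuinely new feature---the passive register $H$ that is routed unchanged to Bob---does not spoil the estimate. Concretely, I would aim to prove that $\|\mathcal{N}-\mathcal{M}\|_\diamond\leq\epsilon$ implies a bound of the same form as before,
\[
  \bigl| Q_{EH}(\mathcal{N}) - Q_{EH}(\mathcal{M}) \bigr| \leq 8\epsilon\log|B| + 4H_2(\epsilon),
\]
with in particular no dependence on $|H|$.

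First I would fix the block length $n$ and take the maximizing (w.l.o.g.\ pure) parameter pair $(\rho^{(n)},\kappa^{(n)})$ for one of the two channels, say $\mathcal{N}$, using it as a feasible---though generally suboptimal---choice for $\mathcal{M}$; this reduces the problem to bounding the single difference $\frac1n\bigl|I_c(\rho^{(n)};(\mathcal{N}^{\ox n})_{\kappa^{(n)}}) - I_c(\rho^{(n)};(\mathcal{M}^{\ox n})_{\kappa^{(n)}})\bigr|$, and the symmetric choice then gives the two-sided estimate. The key structural observation is that the map $\mathcal{N}\mapsto(\mathcal{N}^{\ox n})_{\kappa^{(n)}}$, with $\kappa^{(n)}$ held fixed, is the composition of the CPTP preparation $\rho\mapsto\rho\ox\kappa^{(n)}$ with the channel $\mathcal{N}^{\ox n}\ox\id_H$; since the diamond norm is monotone under pre- and post-composition with CPTP maps and is stable under tensoring with $\id_H$, one obtains
\[
  \bigl\| (\mathcal{N}^{\ox n})_{\kappa^{(n)}} - (\mathcal{M}^{\ox n})_{\kappa^{(n)}} \bigr\|_\diamond \leq \bigl\| \mathcal{N}^{\ox n} - \mathcal{M}^{\ox n} \bigr\|_\diamond .
\]
This is precisely the point at which the extra register is shown to be harmless: $H$ and the state $\kappa^{(n)}$ are common to both compared channels and get absorbed into the reference system, so they cannot inflate the distance.

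Next I would invoke the hybrid argument itself: interpolate between $\mathcal{N}^{\ox n}$ and $\mathcal{M}^{\ox n}$ by swapping one tensor factor at a time (with $H$ and $\kappa^{(n)}$ fixed across the whole chain), so that consecutive channels differ in a single factor. The Leung--Smith per-step estimate then bounds the change of $\frac1n I(R\rangle B^nH)$ by a quantity of order $\epsilon\log|B| + H_2(\epsilon)$ that is independent of $n$; summing the $n$ steps and dividing by $n$ yields a bound uniform in $n$, which therefore survives the supremum over $n$ in~\eqref{eq:QEH}. The remaining manipulations---passing from fidelity to trace distance via Fuchs--van de Graaf~\cite{Fuchs-vandeGraaf} and controlling entropies via Fannes~\cite{Fannes}---are exactly those already used in the converse parts of Theorems~\ref{QH+QHtens} and~\ref{QEH}.

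The main obstacle, and the only place where care beyond Proposition~\ref{prop:asymp-continuity} is needed, will be verifying that the per-step bound remains insensitive to both $n$ and $|H|$. The $n$-independence is the heart of the Leung--Smith telescoping: only the dimension of the single swapped output factor $B_j$, rather than the full block dimension $|B|^n$, enters the Fannes/Alicki--Fannes step. The $|H|$-independence follows because $H$ is appended passively and identically in both terms, so in the conditional-entropy form of the coherent-information difference it acts as a fixed side register shared by the two channels and drops out of the per-step estimate. Once these two insensitivities are confirmed, the continuity bound and the asserted statement follow as in the earlier propositions.
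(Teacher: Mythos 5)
Your proposal is correct and follows essentially the same route as the paper, whose proof is simply the observation that the Leung--Smith hybrid argument of Proposition~\ref{prop:asymp-continuity} applies verbatim to the regularized formula of Theorem~\ref{QEH} with the joint parameter $\rho^{(n)}\ox\kappa^{(n)}$ held fixed. Your explicit check that the side register $H$ is harmless---it enters only as a conditioning system in the per-step Alicki--Fannes estimate, so only $\log|B_j|$ appears and the bound is uniform in $n$ and $|H|$---is exactly the detail the paper leaves implicit.
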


\medskip
The super-activation of $U \in \mathfrak{A}$ with $\SWAP$ depicted in 
Fig.~\ref{fig:super} translates to positive capacity of the entangled helper. 
We discuss two concrete examples of two-qubit unitaries:

\begin{enumerate}[label={\emph{E-\arabic{*}}}]
  \item $Q_{EH}(\SWAP^\gamma) > 0$ for $\gamma \in [0.5,0.6649)$,
        cf.~Section~\ref{sec:CA}, example A-1.

  \item Consider $U$ corresponding to a point on the line segment joining
        $(\frac{\pi}{4},\frac{\pi}{4},\frac{\pi}{4})$ and 
        $(\frac{\pi}{2},\frac{\pi}{4},\frac{\pi}{4})$. These points are vertices of 
        $\mathfrak{A}$ (see Fig.~\ref{fig:uad}), and hence the line segment is an 
        edge of the universally anti-degradable tetrahedron. As we saw in
        Section~\ref{sec:CA}, example A-2, this is super-activated by $\SWAP$.
\end{enumerate}

We now show how to evaluate the single-copy coherent information in the entangled 
environment-assisted capacity of $\SWAP^{\gamma}$, with $\gamma \in [0,1]$,
as per Theorem~\ref{QEH}, Eq.~(\ref{eq:QEH}); the setting is as in the lower 
part of Fig.~\ref{fig:swap}.
To proceed, we need the following lemma.

\medskip
\begin{lemma}
  \label{lemma:universal-degradability}
  If an isometry $U:AE \longrightarrow BF$ is universally degradable, then
  for every $\ket{\kappa} \in EH$, the channel 
  $\mathcal{N}_\kappa:\mathcal{L}(A) \longrightarrow \mathcal{L}(BH)$ is
  degradable.
\end{lemma}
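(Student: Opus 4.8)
The plan is to reduce the entangled-helper case to the hypothesis of universal degradability by exploiting the Schmidt decomposition of $\ket{\kappa}$ across the cut between $E$ and $H$. First I would write $\ket{\kappa}^{EH} = \sum_i \sqrt{p_i}\,\ket{e_i}^E\ket{h_i}^H$ with orthonormal Schmidt vectors $\{\ket{e_i}\}\subset E$ and $\{\ket{h_i}\}\subset H$. Feeding this into $U$ while leaving $H$ untouched produces a Stinespring isometry $W:A\hookrightarrow BHF$ of $\mathcal{N}_\kappa$, namely $W = \sum_i \sqrt{p_i}\,\ket{h_i}^H\ox U_i$, where $U_i := U(\id_A\ox\ket{e_i}^E):A\hookrightarrow BF$ is \emph{exactly} the Stinespring isometry of the pure-helper channel $\mathcal{N}_{e_i}$ (one checks $W^\dag W = \sum_i p_i\, U_i^\dag U_i = \id_A$). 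Since $\ket{\kappa}$ is pure, the complementary output of $\mathcal{N}_\kappa$ is just $F$.

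Next I would read off both channels from $W$. Because $H$ is traced out in the complementary channel, the cross terms $i\neq j$ (which carry the off-diagonal $\ket{h_i}\bra{h_j}^H$) drop out, leaving $\widetilde{\mathcal{N}_\kappa}(\rho) = \sum_i p_i\,\widetilde{\mathcal{N}_{e_i}}(\rho)$; whereas $\mathcal{N}_\kappa(\rho)$ retains the full coherences on $BH$, and in particular $\bra{h_k}^H\mathcal{N}_\kappa(\rho)\ket{h_k}^H = p_k\,\mathcal{N}_{e_k}(\rho)$ thanks to $\langle h_k|h_i\rangle = \delta_{ki}$.

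Then I would construct the degrading map. By universal degradability, each $\ket{e_i}\in E$ being pure, there is a CPTP degrading map $\mathcal{M}_{e_i}:\mathcal{L}(B)\to\mathcal{L}(F)$ with $\mathcal{M}_{e_i}\circ\mathcal{N}_{e_i} = \widetilde{\mathcal{N}_{e_i}}$. I would then define $\mathcal{M}:\mathcal{L}(BH)\to\mathcal{L}(F)$ to measure $H$ in the orthonormal Schmidt basis $\{\ket{h_i}\}$ (completing it to a full basis of $H$ and assigning any fixed CPTP map to the extra outcomes, so that $\mathcal{M}$ is trace preserving on all of $\mathcal{L}(BH)$) and, conditioned on outcome $i$, apply $\mathcal{M}_{e_i}$ to $B$, i.e.~$\mathcal{M}(\tau) = \sum_i \mathcal{M}_{e_i}\bigl(\bra{h_i}^H\tau\ket{h_i}^H\bigr)$. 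This is manifestly CPTP, and the degrading identity then follows by a one-line computation: $\mathcal{M}\bigl(\mathcal{N}_\kappa(\rho)\bigr) = \sum_k p_k\,\mathcal{M}_{e_k}\bigl(\mathcal{N}_{e_k}(\rho)\bigr) = \sum_k p_k\,\widetilde{\mathcal{N}_{e_k}}(\rho) = \widetilde{\mathcal{N}_\kappa}(\rho)$, which is precisely degradability of $\mathcal{N}_\kappa$.

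The conceptual crux — the step I expect to carry the argument — is the observation that the \emph{same} orthonormal basis $\{\ket{h_i}\}$ plays two roles at once: it is the pointer basis that the degrading map, acting on all of $BH$, can coherently read off, and it is the basis whose partial trace kills exactly the off-diagonal terms in $\widetilde{\mathcal{N}_\kappa}$. This is what converts a possibly $E$-mixed situation into a classical mixture of pure-helper channels, each covered by the hypothesis. The only minor point to watch is that universal degradability is assumed only for \emph{pure} $\ket{\eta}\in E$, which is exactly why the reduction must route through the pure Schmidt vectors $\ket{e_i}$ rather than through the (generally mixed) reduced state $\Tr_H\proj{\kappa}$ on $E$.
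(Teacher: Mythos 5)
Your proof is correct and follows essentially the same route as the paper's: Schmidt-decompose $\ket{\kappa}$ across the $E$--$H$ cut, observe that the complementary channel is the mixture $\sum_i p_i\,\widetilde{\mathcal{N}_{e_i}}$ while the Schmidt index remains readable in the $H$ output, and degrade by measuring $H$ in the Schmidt basis and applying the per-branch degrading maps guaranteed by universal degradability. Your added care in completing the Schmidt basis so that the degrading map is trace preserving on all of $\mathcal{L}(BH)$ is a minor detail the paper leaves implicit; otherwise the two arguments coincide.
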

\begin{proof}
Recall $\mathcal{N}_\kappa(\rho) = \tr_F (\mathcal{N}\ox\id_H)(\rho^A\ox\kappa^{EH})$,
with Stinespring dilation $V\ket{\varphi} = (U\ox\1)(\ket{\varphi}\ket{\kappa})$,
mapping $A$ to $BH \ox F$. Hence, the complementary channel is given by
\[
  \widetilde{\mathcal{N}_\kappa}(\rho) = \tr_B \mathcal{N}(\rho^A\ox\kappa^E),
\]
with the reduced state $\kappa^E = \tr_H \kappa$.

Let $\ket{\kappa} = \sum_i \sqrt{p_i}\ket{\eta_i}^E \ket{i}^H$ be the Schmidt
decomposition. Then, on the one hand,
\[
  \widetilde{\mathcal{N}_\kappa} = \sum_i p_i \widetilde{\mathcal{N}_{\eta_i}}
                                 = \sum_i p_i \mathcal{D}_i \circ \mathcal{N}_{\eta_i},
\]
with degrading CPTP maps $\mathcal{D}_i^{B\rightarrow F}$ by assumption.

As $i$ is accessible in the output of $\mathcal{N}_\kappa$ by measuring
$H$ in the computational basis, we obtain the degrading map 
$\mathcal{D}^{BH\rightarrow F}$ such that
$\widetilde{\mathcal{N}_\kappa} = \mathcal{D} \circ \mathcal{N}_\kappa$,
via
$\mathcal{D}(\sigma\ox\ket{i}\!\bra{j}) = \delta_{ij}\mathcal{D}_i(\sigma)$.
\end{proof}

\medskip
Returning to $\SWAP^\gamma$, the combined channel and environment input is
$\rho^A \ox \kappa^{EH}$. Because of the $u\ox u$-symmetry of 
the gate, we may without loss of generality choose the bases of $E$ and $H$ 
such that 
$\ket{\kappa}^{EH} = \sqrt{\lambda}\ket{00}+\sqrt{1-\lambda}\ket{11}$.

Now, $\kappa$ is invariant under the action of $Z^E \ox {Z^\dagger}^H$,
hence we obtain a covariance property of the channel:
\[
  \mathcal{N}_\kappa(Z\rho Z^\dagger) = (Z \ox Z^\dagger) \mathcal{N}_\kappa(\rho) (Z^\dagger \ox Z).
\]
By Lemma~\ref{lemma:universal-degradability}, $\mathcal{N}_\kappa$ is degradable,
hence the coherent information is concave in $\rho^A$~\cite{DS05}
and so the coherent information is maximized on an input density
$\rho^A$ that commutes with $Z$. I.e.~we may assume that 
$\rho^A = \mu\proj{0} + (1-\mu)\proj{1}$.

We then find for the output states of Bob ($B'B=HB$) and the environment ($F$) that
\[\begin{split}
 \rho^{B' B}
  &= \lambda \left( \mu +  (1-\mu) \left|\frac{1-e^{i\pi \gamma}}{2}\right|^{2} \right)\proj{00}    \\
  &\phantom{=}
   + (1-\lambda) \left( (1-\mu) + \mu \left|\frac{1-e^{i\pi \gamma}}{2}\right|^{2} \right)\proj{11} \\ 
  &\phantom{=}
   + \sqrt{\lambda(1-\lambda)} \left( \frac12 - \frac{\mu}{2} e^{-i \pi \gamma} 
                                              - \frac{1-\mu}{2} e^{i \pi \gamma} \right) \ket{00}\!\bra{11} \\
  &\phantom{=}
   + \sqrt{\lambda(1-\lambda)} \left( \frac12 - \frac{\mu}{2} e^{i \pi \gamma} 
                                              - \frac{1-\mu}{2} e^{-i \pi \gamma} \right) \ket{11}\!\bra{00} \\
  &\phantom{=}
   + \lambda(1-\mu) \left|\frac{1+e^{i\pi \gamma}}{2}\right|^{2} 
                      \bigl( \ket{01}\!\bra{01} + \ket{10}\!\bra{10} \bigr),
\end{split}\]
and $\rho^{F}$ is diagonal in the computational basis:
\[\begin{split}
  \rho^F \! &= \left( \! \lambda \mu \!+\! \lambda(1-\mu) \left|\frac{1+e^{i\pi \gamma}}{2}\right|^2 \!\!
                           \!+\! \mu(1-\lambda) \left|\frac{1-e^{i\pi \gamma}}{2}\right|^2 \right) \proj{0} \\
         &\phantom{==}
          + \left( (1-\lambda)(1-\mu) + \lambda(1-\mu) \left|\frac{1-e^{i\pi \gamma}}{2}\right|^2 \right. \\
         &\phantom{===============}
            \left.               + \mu(1-\lambda) \left|\frac{1+e^{i\pi \gamma}}{2}\right|^2 \right) \proj{1}.
\end{split}\]

\begin{figure}[ht]
  \centering
  \includegraphics[height = 5cm, width = 8cm]{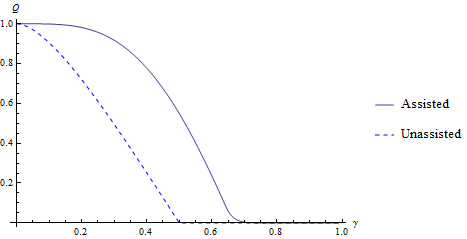}
  \caption{The unbroken curve in the plot is $Q_{EH\ox}$, the single-copy coherent information
    in the formula for the entangled environment-assisted 
    quantum capacity of $\SWAP^{\gamma}$, Eq.~(\ref{eq:QEH}),
    i.e.~the maximum of $I_c(\rho;\mathcal{N}_\kappa)$ over states 
    $\rho^A$ and $\kappa^{EH}$.
    The dashed line is the restricted environment-assisted
    quantum capacity $Q_{H\ox}(\SWAP^\gamma)$.}
  \label{fig:easwapgamma}
\end{figure}

In Fig.~\ref{fig:easwapgamma} we plot the single-copy coherent information
assisted by an entangled environment, maximized over $\lambda$ and $\mu$, 
and compare it with the same quantity without pre-shared entanglement. 
This is actually the quantum capacity assisted by entangled states
of the form 
$\kappa^{E^nH^n} = \kappa^{E_1H_1}\ox\cdots\ox\kappa^{E_nH_n}$ in
Definition~\ref{def:entanglement-assisted-code}, which we might
denote $Q_{EH\ox}(U)$ in analogy with $Q_{H\ox}(U)$.
As shown in the plot, the entanglement between Helen and Bob increases the 
quantum capacity of $\SWAP^{\gamma}$ to a positive quantity
for a large interval of $\gamma$ values, up to $\gamma^{**} \approx 0.7662$. 

\medskip
\begin{remark}
  It follows that we could achieve super-activations of $\SWAP^\gamma$ with $\SWAP$ 
  for larger interval of $\gamma \in [0.5, 0.7662)$, when optimizing  
  over the input of $\SWAP^\gamma$ and the initial environment state, in Section~\ref{sec:CA}, example A-1.
\end{remark}

\medskip
\begin{remark}
  We could even contemplate a fully entanglement-assisted model, where 
  both Alice and Helen share prior entanglement with Bob. This is a special 
  case of Hsieh \emph{et al.}'s entanglement-assisted multi-access
  channel~\cite{HDW08}: Indeed, if the
  achievable rate region of pairs of rates $(R_A,R_E)$ for quantum
  communication via $\mathcal{N}^{AE\rightarrow B}$ assisted by
  arbitrary pre-shared entanglement is known, then the entanglement-
  and helper-assisted quantum capacity is given by the largest
  $R$ such that the pair $(R,0)$ is achievable.
\end{remark}


\section{Conclusion}
\label{sec:conclusion}
\IEEEPARstart{W}{e} have laid the foundations of a theory of
quantum communication with passive environment-assistance, where
a helper is able to select the initial environment state of the 
channel, modelled as a unitary interaction. The general, multi-letter,
capacity formulas we gave for the quantum capacity assisted by an
unrestricted, and by a separable helper resemble the analogous
formula for the unassisted capacity. Like the latter, which is
contained as a special case, the environment-assisted
capacities are continuous in the channel, but in general seem to
be hard to characterize in simple ways.

In our development we have then focused on two-qubit unitaries, 
giving rise to very simple-looking qubit channels for which the
environment-assisted quantum capacity with separable helper can be
evaluated. Interestingly, there are unitaries giving rise to 
anti-degradable channels for every input state, hence the separable
helper capacity vanishes; yet, some of these ``universally anti-degradable''
unitaries could be super-activated by unitaries from the same class,
in some cases by themselves. 
In fact, there is a single unitary $\sqrt{\SWAP}$  
that activates all universally anti-degradable unitaries $U\in\mathfrak{A}$ 
(except itself, according to numerics).
In particular, the quantum capacity $Q_H$ with unrestricted helper can be 
strictly larger than the one with separable helper, $Q_{H\ox}$, and the 
computation of the former remains a major open problem. 

Some other interesting open questions include the following:
\begin{itemize}
  \item How to characterize the set of unitaries $U$ such that $Q_H(U) = 0$?
    Note that in the two-qubit case we only the example $U=\SWAP$, but 
    it seems that $\sqrt{\SWAP}$ is another one, but we lack a proof.
  \item Can $Q_H$ be super-activated, i.e.~are there $U$, $V$ with $Q_H(U)=Q_H(V)=0$
    but $Q_H(U\ox V)>0$? From the above analysis, $U=\SWAP$ and
    $V=\sqrt{\SWAP}$ seem good candidates
\end{itemize}

Finally, we only just started the issue of entangled environment-assistance,
motivated by the distinguished role of the SWAP gate in many of
our examples. But for the moment we do not even have an understanding
of additivity or super-activation of the entangled-helper assisted
capacities $Q_{EH}$ and $Q_{EH\ox}$.

\medskip
Looking further afield, our model and approach can evidently be
adapted to other communication capacities, say for instance
the private capacity $P$ and classical capacity $C$ of a channel.
Regarding the former, our examples of super-activation and 
self-super-activation apply directly because private and quantum
capacity coincide for degradable and anti-degradable channels.
On the classical capacity we have preliminary results which will be
reported on in forthcoming work~\cite{KM14}.


\section*{Acknowledgements}
SK thanks the 
Universitat Aut\`{o}noma de Barcelona for kind hospitality.
AW's work was supported
by the European Commission (STREPs ``QCS'' and ``RAQUEL''),
the European Research Council (Advanced Grant ``IRQUAT'') 
and the Spanish MINECO (grant FIS2008-01236) with FEDER funds.
DY's work was supported by the ERC (Advanced Grant ``IRQUAT'') and the 
NSFC (Grant No.~11375165).
Part of the work was done during the programme ``Mathematical Challenges in Quantum Information'' 
(MQI) at the Issac Newton Institute in Cambridge whose hospitality was gratefully acknowledged, 
where DY was supported by the Microsoft Visiting Fellowship. 
The authors thank Stefan B\"auml, Jan Bouda, Marcus Huber and Claude Kl\"ockl
for discussions on super-activation.



\appendices

\section{Communication in the presence of a jammer (QAVC)}
\label{app:jammer}
The purpose of this appendix is to prove the adversarial channel 
capacity theorem, which we restate here:

\textit{Theorem~\ref{thm:QJ}:}
  For any jammer channel $\mathcal{N}:{AE \rightarrow B}$, 
  \[
    Q_{J,r}(\mathcal{N}) 
      = \sup_n \max_{\rho^{(n)}} \min_{\eta} \frac1n I_c\bigl(\rho^{(n)};(\mathcal{N}_\eta)^{\ox n}\bigr),
  \]
  where the maximization is over states $\rho^{(n)}$ on $A^n$, and the minimization
  is over arbitrary states $\eta$ on $E$.

\medskip
\begin{proof}
The converse part, i.e.~the ``$\leq$'' inequality, follows from~\cite[Thm.~27]{ABBN},
because in the proof it is enough to consider tensor product strategies
$\eta^{(n)} = \eta_1\ox\cdots\ox\eta_n$ of the jammer, hence
$\mathcal{N}_{\eta^{(n)}} = \mathcal{N}_{\eta_1}\ox\cdots\ox\mathcal{N}_{\eta_n}$
is a tensor product map as in the AVQC model. Thus the proof of~\cite{ABBN}
applies unchanged.

For the direct part (``$\geq$''), consider input states $\rho^{(n)}$ on $A^n$
and a rate
\[
  R \leq \min_\eta I_c\bigl( \rho^{(n)};(\mathcal{N}_\eta)^{\ox n} \bigr) - \delta,
\]
for $\delta > 0$ and all integers $n$.
We invoke a result of Bjelakovi\'{c} \emph{et al.}~\cite{BBN09} on the
so-called \emph{compound channel} 
$\bigl((\mathcal{N}_\eta)^{\ox n}\bigr)_{\eta\in\mathcal{S}(E)}$, to the
effect that there exist codes $(\mathcal{D}_n,\mathcal{E}_n)$ for all
block lengths $n$ and with rate $R$ that perform universally well for all
the i.i.d.~channels $(\mathcal{N}_\eta)^{\ox n}$:
\[
  F_n := \min_\eta 
            F\bigl( \Phi^{RB_0},
                    (\mathcal{D}_n\circ\mathcal{N}_\eta^{\ox n}\circ\mathcal{E}_n)\Phi^{RA_0} \bigr)
      \geq 1-c^n,
\]
with some $c<1$. For later use, let us rephrase this condition as a
property of $\eta^{(n)} = \eta^{\ox n}$:
\begin{equation}\begin{split}
  c^n &\geq 1-F \\
      &= \tr (\1-\Phi)\bigl( \mathcal{D}_n\circ\mathcal{N}^{\ox n}(\mathcal{E}_n(\Phi)\ox\eta^{(n)})\bigr) \\
      &= \tr \bigl(({\mathcal{N}^\dagger}^{\ox n}\circ\mathcal{D}_n^\dagger)(\1-\Phi)\bigr)
             \bigl(\mathcal{E}_n(\Phi)\ox\eta^{(n)}\bigr)                                                  \\
      &= \tr X_n \eta^{(n)},
  \label{eq:F-as-expectation}
\end{split}\end{equation}
where $0\leq X_n\leq \1$ is a constant operator depending only on the code.

We claim that, using a shared uniformly random permutation
$\pi \in S_n$ to permute the $n$ input/output systems, the same code
is good against the jammer. Concretely, let $\mathcal{U}^\pi$ be the
conjugation by the permutation unitary on an $n$-party system, and
define, for a given $n$,
\begin{align*}
  \mathcal{E}_\pi &:= \mathcal{U}^\pi \circ \mathcal{E}_n, \\
  \mathcal{D}_\pi &:= \mathcal{D}_n \circ \mathcal{U}^{\pi^{-1}}.
\end{align*}
Then, for any jammer strategy $\eta^{(n)} \in \mathcal{S}(E^n)$,
\begin{align}
  1-&\overline{F}\left(\eta^{(n)}\right) \nonumber\\
    &= \frac{1}{n!} \sum_{\pi\in S_n} 
         1- F\bigl( \Phi^{RB_0},
                    \bigl(\mathcal{D}_\pi\circ(\mathcal{N}^{\ox n})_{\eta^{(n)}}\circ\mathcal{E}_\pi\bigr)
                                                                                        \Phi^{RA_0} \bigr) \nonumber\\
    &= \tr \left( X_n \frac{1}{n!} \sum_{\pi\in S_n} \mathcal{U}^{\pi}(\eta^{(n)}) \right) \nonumber\\
    &= \tr X_n \overline{\eta}^{(n)},
  \label{eq:Sn-symm-F}
\end{align}
using Eq.~(\ref{eq:F-as-expectation}), and where 
$\overline{\eta}^{(n)} = \frac{1}{n!} \sum_{\pi\in S_n} \mathcal{U}^{\pi}(\eta^{(n)})$
is permutation symmetric.

At this point, we can apply the postselection technique of~\cite{CKR}, which
relies on the matrix inequality
\[
  \overline{\eta}^{(n)} \leq (n+1)^{|E|^2} \int_\sigma {\rm d}\sigma \, \sigma^{\ox n},
\]
with a certain universal probability measure ${\rm d}\sigma$ over states on $E$.
Thus, according to the assumption and the above Eq.~(\ref{eq:Sn-symm-F}),
we find that for the permutation-symmetrized compound channel code,
\[
  1-\overline{F}(\eta^{(n)}) \leq (n+1)^{|E|^2} c^n
\]
for all jammer strategies $\eta^{(n)}$, and the right hand side of course 
still goes to zero exponentially fast, concluding the proof.
\end{proof}

\section{Parametrization of two-qubit unitaries \protect\\ and degradability regions}
\label{app:two-qubit}
For the further analysis we require another analytical criterion
for anti-degradability:

\begin{lemma}[Myhr/L\"utkenhaus~\cite{ML09}]
  A qubit channel with qubit environment is anti-degradable if and
  only if $\lambda_{\max}(\rho_{RB}) \leq \lambda_{\max}(\rho_{B})$, where
  $\lambda_{\max}(X)$ is the maximum eigenvalue of a Hermitian matrix $X$. 
  Here $\rho_{RB}$ is the Choi matrix of the given qubit channel and 
  $\rho_{B}$ is the reduced state after tracing out the reference system $R$.
  \qed
  \label{Lemma:antdeg}
\end{lemma}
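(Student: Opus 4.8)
The plan is to route the proof through the standard equivalence between anti-degradability and \emph{symmetric extendibility} of the Choi state, and to invoke the two-qubit-specific spectral characterization only at the very end. Fix a Stinespring dilation $V:A\hookrightarrow B\ox F$ of $\mathcal{N}$ and set $\ket{\psi}^{RBF}=(\id_R\ox V)\ket{\Phi}^{RA}$, so that $\rho_{RB}=\tr_F\proj{\psi}$ is the Choi state of $\mathcal{N}$ and $\rho_{RF}=\tr_B\proj{\psi}$ that of $\widetilde{\mathcal N}$. I would first prove: $\mathcal{N}$ is anti-degradable if and only if $\rho_{RB}$ admits an extension $\sigma^{RBB'}$ (with $B'\cong B$) whose two marginals $\tr_{B'}\sigma$ and $\tr_B\sigma$ both equal $\rho_{RB}$; averaging over the swap of $B$ and $B'$ shows this is the same as \emph{symmetric} extendibility. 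For ``only if,'' given a degrading map $\mathcal{M}^{F\to B}$ with $\mathcal{N}=\mathcal{M}\circ\widetilde{\mathcal N}$, dilate $\mathcal{M}$ by an isometry $W:F\hookrightarrow B'\ox G$ and put $\sigma^{RBB'}=\tr_G\bigl[(\1_{RB}\ox W)\proj{\psi}(\1_{RB}\ox W)^\dagger\bigr]$: tracing out $B'G$ returns $\rho_{RB}$, while tracing out $BG$ gives $(\id\ox\mathcal{M})\rho_{RF}=\rho_{RB'}$. For the converse, purify a symmetric extension to $\ket{\Xi}^{RBB'C}$; since $\ket{\Xi}$ and $\ket{\psi}$ both purify $\rho_{RB}$, there is an isometry $W:F\hookrightarrow B'C$ with $(\1_{RB}\ox W)\ket{\psi}=\ket{\Xi}$, and $\mathcal{M}(X):=\tr_C WXW^\dagger$ satisfies $(\id\ox\mathcal{M})\rho_{RF}=\rho_{RB'}$, so by the Choi-Jamiolkowski isomorphism $\mathcal{M}\circ\widetilde{\mathcal N}=\mathcal{N}$.

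Next I would record two reductions that make the remaining problem genuinely two-qubit. Because $\ket{\psi}^{RBF}$ is pure, the marginals across the $B\,|\,RF$ cut share the same nonzero spectrum, whence $\lambda_{\max}(\rho_B)=\lambda_{\max}(\rho_{RF})$; the asserted criterion is therefore the manifestly $B\leftrightarrow F$-dual statement that $\mathcal{N}$ is anti-degradable iff $\lambda_{\max}(\rho_{RB})\le\lambda_{\max}(\rho_{RF})$, consistent with degradability being the reverse inequality. Moreover, conjugating $\rho_{RB}$ by a local unitary $u_R\ox u_B$ amounts to pre- and post-composing $\mathcal{N}$ with unitary channels (using $(u_R\ox\1)\ket{\Phi}=(\1\ox u_R^{T})\ket{\Phi}$ for the reference factor), hence preserves anti-degradability and symmetric extendibility, while leaving both $\operatorname{spec}(\rho_{RB})$ and $\lambda_{\max}(\rho_B)$ invariant. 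I would use this freedom to bring the two-qubit Choi state to a normal form depending on only a few real parameters.

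The main obstacle is the last step: showing that a two-qubit state $\rho_{RB}$ is symmetrically extendible in the $B\leftrightarrow B'$ sense above exactly when $\lambda_{\max}(\rho_{RB})\le\lambda_{\max}(\rho_B)$. This is precisely the spectrum condition of Myhr and L\"utkenhaus~\cite{ML09}, and I would reproduce it by writing the existence of $\sigma^{RBB'}$ as a semidefinite feasibility problem over operators commuting with $\1_R\ox\SWAP_{BB'}$, subject to the single linear marginal constraint $\tr_{B'}\sigma=\rho_{RB}$, and solving it explicitly in the normal-form parametrization — exhibiting an optimal extension when the inequality holds and a spectral (eigenvalue) obstruction when it fails. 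This low-dimensional SDP is where the genuine computational content lies; the preceding equivalence and reductions are soft and dimension-independent, so I expect the eigenvalue bookkeeping of the extension to be the delicate part.
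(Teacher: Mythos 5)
The first thing to note is that the paper does not prove this lemma at all: it is imported verbatim from Myhr and L\"utkenhaus~\cite{ML09} and stamped with a qed, so your proposal has to be judged against the route in the cited literature rather than against anything in the text. Your first two stages are correct and are indeed the standard reduction. The equivalence ``$\mathcal{N}$ anti-degradable $\Leftrightarrow$ the Choi state $\rho_{RB}$ has an extension with both $B$-marginals equal to $\rho_{RB}$'' is proved exactly as you say: dilating the degrading map gives the extension, and conversely two purifications of $\rho_{RB}$ are related by an isometry $W:F\hookrightarrow B'C$ (legitimate here since $|F|=2\geq\operatorname{rank}\rho_{RB}$), after which Choi--Jamiolkowski converts $(\id\ox\mathcal{M})\rho_{RF}=\rho_{RB'}$ into $\mathcal{M}\circ\widetilde{\mathcal{N}}=\mathcal{N}$; the swap-twirl upgrading to a genuinely symmetric extension is also fine, as is the observation $\lambda_{\max}(\rho_B)=\lambda_{\max}(\rho_{RF})$ from purity of $\ket{\psi}^{RBF}$.

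The genuine gap is in your final step. The statement you propose to establish by an SDP analysis --- ``a two-qubit state $\rho_{RB}$ is symmetrically extendible exactly when $\lambda_{\max}(\rho_{RB})\leq\lambda_{\max}(\rho_B)$'' --- is \emph{false} for general two-qubit states, so no parametrization will produce the claimed eigenvalue obstruction. Take the isotropic state $\rho_{RB}=p\proj{\Phi}+(1-p)\frac{\1}{4}$ with $\ket{\Phi}=(\ket{00}+\ket{11})/\sqrt{2}$ at $p=\frac12$: then $\lambda_{\max}(\rho_{RB})=\frac58>\frac12=\lambda_{\max}(\rho_B)$, yet the state is symmetrically extendible, since the exact two-qubit criterion $\tr\rho_B^2\geq\tr\rho_{RB}^2-4\sqrt{\det\rho_{RB}}$ (conjectured in~\cite{ML09}, later proved in full by Chen, Ji, Kribs, L\"utkenhaus and Zeng) reads $0.5\geq\frac{7}{16}-4\sqrt{5/4096}\approx 0.298$. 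The spectral ``iff'' is a theorem of~\cite{ML09} precisely for states of rank at most two, and this is the second, essential place where the qubit-environment hypothesis enters: $\operatorname{rank}\rho_{RB}=\operatorname{rank}\rho_F\leq|F|=2$. Your proposal uses the qubit environment only to make the problem two-qubit and never records this rank constraint. Once you do restrict to rank two, the repair is natural --- with $\det\rho_{RB}=0$ the general criterion degenerates to the purity comparison $\tr\rho_B^2\geq\tr\rho_{RB}^2$, which for a rank-two $\rho_{RB}$ and a qubit $\rho_B$ is equivalent to $\lambda_{\max}(\rho_{RB})\leq\lambda_{\max}(\rho_B)$ by monotonicity of $x^2+(1-x)^2$ in $\max(x,1-x)$ --- but the extendibility analysis must then be carried out on the rank-two family (which is what~\cite{ML09} actually do), not on generic two-qubit states as your plan describes.
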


\medskip
Following the analysis in Section~\ref{sec:two-qubit}, we restrict our attention
to the parameter space $\mathfrak{T}$
of $(\alpha_{x},\alpha_{y},\alpha_{z})$ satisfying 
$\frac{\pi}{2} \geq \alpha_x \geq \alpha_y \geq \alpha_z \geq 0$, which forms 
a tetrahedron with vertices
$(0,0,0)$, $(\frac{\pi}{2},0,0)$, $(\frac{\pi}{2},\frac{\pi}{2},0)$
and $(\frac{\pi}{2},\frac{\pi}{2},\frac{\pi}{2})$.

Given a unitary $U(\alpha_{x},\alpha_{y},\alpha_{z})$ and an initial state
of the environment,
$\vert \xi \rangle = \cos(\frac{\theta}{2})\ket{0} + e^{i\varphi}\sin(\frac{\theta}{2})\ket{1}$,
where $\theta \in [0,\pi]$, $\varphi \in [0,2\pi)$, we evaluate the Choi matrix by 
inputting a maximally entangled state 
$\ket{\Phi} = \frac{1}{\sqrt{2}}\bigl(\vert 00 \rangle + \vert 11 \rangle\bigr)$.
Thus the output state is
$\ket{\Psi}^{RBF} = (\1^{R} \otimes U^{AE})\bigl( \ket{\Phi}^{RA} \otimes \ket{\xi}^E \bigr)$. 
From the Schmidt decomposition, the maximum eigenvalue of $\rho_{RB}$ is equal to 
the maximum eigenvalue of
$\rho^{F} =\Tr_{RB} \proj{\Psi}$, which can be written in matrix form as
\ba
\frac12
\left[\begin{array}{cc}
  {1 + a_{F}}      & {b_{F} - ic_{F}} \\
  {b_{F} + ic_{F}} & {1 - a_{F}}
\end{array}\right],
\ea
with the Bloch vector components given by
\begin{align*}
  a_{F} &= \cos(\theta)\cos(\alpha_{x})\cos(\alpha_{y}),              \\
  b_{F} &= \sin(\theta)\cos(\varphi)\cos(\alpha_{z})\cos(\alpha_{y}), \\
  c_{F} &= \sin(\theta)\sin(\varphi)\cos(\alpha_{z})\cos(\alpha_{x}).
\end{align*}
Similarly, $\rho^{B} =\Tr_{RF} \proj{\Psi}$ has Bloch vector components given by
\begin{align*}
  a_{B} &= \cos(\theta)\sin(\alpha_{x})\sin(\alpha_{y}),              \\
  b_{B} &= \sin(\theta)\cos(\varphi)\sin(\alpha_{z})\sin(\alpha_{y}), \\
  c_{B} &= \sin(\theta)\sin(\varphi)\sin(\alpha_{z})\sin(\alpha_{x}).
\end{align*}

The largest eigenvalue of a qubit density matrix $\rho$ with Bloch vector components
$a,b,c$ is $\frac{1+\sqrt{a^2 + b^2 + c^2}}{2}$. When we impose the condition for 
anti-degradability from Lemma~\ref{Lemma:antdeg} we get the following inequality:
\[\begin{split}
  0 &\geq \cos^2(\theta) \cos(\alpha_{x} + \alpha_{y}) \cos(\alpha_{x} - \alpha_{y}) \\
    &\phantom{=}
     + \sin^2(\theta) \cos^2(\varphi) \cos(\alpha_{z} + \alpha_{y}) \cos(\alpha_{z} - \alpha_{y}) \\
    &\phantom{=}
     + \sin^2(\theta) \sin^2(\varphi) \cos(\alpha_{z} + \alpha_{x}) \cos(\alpha_{z} - \alpha_{x}).
\end{split}\]

This must be true for all input states of environment, hence for all $\theta \in [0,\pi]$,
$\varphi \in [0,2\pi)$. Thus we arrive at
\begin{equation}
  \label{UP}
  \alpha_{x} + \alpha_{y},\ 
  \alpha_{y} + \alpha_{z},\ 
  \alpha_{z} + \alpha_{x} \geq \frac{\pi}{2},
\end{equation}
for the universally anti-degradable region. This forms another tetrahedron with
vertices $(\frac{\pi}{4},\frac{\pi}{4},\frac{\pi}{4})$, 
$(\frac{\pi}{2},\frac{\pi}{4},\frac{\pi}{4})$,
$(\frac{\pi}{2},\frac{\pi}{2},0)$ and $(\frac{\pi}{2},\frac{\pi}{2},\frac{\pi}{2})$, 
which is depicted in Fig.~\ref{fig:uad}.

By swapping the outputs of unitary $U \in \mathfrak{A}$ we get another 
unitary $V = \SWAP \cdot U \in \mathfrak{D}$. By applying this transformation to 
the vertices of the parameter region of $\mathfrak{A}$,
we get the vertices of the parameter region $\mathfrak{D}$ given by 
$(\frac{\pi}{4},\frac{\pi}{4},\frac{\pi}{4})$,
$(\frac{\pi}{4},\frac{\pi}{4},0)$, $(\frac{\pi}{2},0,0)$ and $(0,0,0)$. 
The unitary $\sqrt{\SWAP}$, with the parameters $(\frac{\pi}{4},\frac{\pi}{4},\frac{\pi}{4})$,
is the unique unitary which lies in the intersection of $\mathfrak{A}$ and $\mathfrak{D}$. 
This gives rise to symmetric qubit channels for every initial state of the environment.



\begin{thebibliography}{99}
\bibitem{ABBN} 
R. Ahlswede, I. Bjelakovi\'{c}, H. Boche, and J. N\"otzel,
``Quantum capacity under adversarial noise: arbitrarily varying quantum channels'', 
\textit{Communications in Mathematical Physics}, vol. 317(1), pp. 103-156, 2013.

\bibitem{BNS98}
H. N. Barnum, M. A. Nielsen, and B. Schumacher,
``Information transmission through a noisy quantum channel'',
\textit{Physical Review A}, vol. 57(6), pp. 4153-4175, 1998.

\bibitem{BDS97}
C. H. Bennett, D. P. DiVincenzo, and J. A. Smolin, ``Capacities of quantum erasure channels'', 
\textit{Physical Review Letters}, vol. 78, pp. 3217-3220, 1997.

\bibitem{BBN09} 
I. Bjelakovi\'{c}, H. Boche, and J. N\"otzel, 
``Entanglement transmission and generation under channel uncertainty: 
Universal quantum channel coding'', 
\textit{Communications in Mathematical Physics}, vol. 292, pp. 55-97, 2009.

\bibitem{BN13} 
H. Boche and J. N\"otzel,
``Arbitrarily Small Amounts of Correlation for Arbitrarily Varying Quantum Channels'',
{arXiv[quant-ph]:1301.6063}, 2013.

\bibitem{BN14}
H. Boche and J. N\"otzel, ``Positivity, Discontinuity, Finite Resources and Nonzero Error for
Arbitrarily Varying Quantum Channels'',
{arXiv[quant-ph]:1401.5360}, 2014.

\bibitem{BEHY11} 
F. G. S. L. Brand\~{a}o, J. Eisert, M. Horodecki, and D. Yang,
``Entangled inputs cannot make imperfect quantum channels perfect'',
\textit{Physical Review Letters}, vol. 106, pp. 230502, 2011.

\bibitem{BCD05}
F. Buscemi, G. Chiribella, and G. M. D'Ariano, 
``Inverting Quantum Decoherence by Classical Feedback from the Environment'',
\textit{Physical Review Letters}, vol. 95, pp. 090501, 2005.

\bibitem{CKR}
M. Christandl, R. K\"onig, and R. Renner,
``Postselection Technique for Quantum Channels with Applications to Quantum Cryptography'',
\textit{Physical Review Letters}, vol. 102, pp. 020504, 2009.

\bibitem{DBF07}
A. D'Arrigo, G. Benenti, and G. Falci,
``Quantum Capacity of a dephasing channel with memory'', 
\textit{New Journal of Physics}, vol.~9(9), pp.~310, 2007.

\bibitem{Devetak03}
I. Devetak, ``The Private Classical Capacity and Quantum Capacity of a Quantum Channel'', 
\textit{IEEE Transactions on Information Theory}, vol. 51(1), pp. 44-55, 2005.

\bibitem{DS05}
I. Devetak and P. W. Shor,
``The Capacity of a Quantum Channel for Simultaneous 
Transmission of Classical and Quantum Information'',
\textit{Communications in Mathematical Physics}, vol. 256(2), pp. 287-303, 2005.

\bibitem{DSS98}
D. P. DiVincenzo, P. W. Shor, and J. A. Smolin,
``Quantum-channel capacity of very noisy channels'', 
\textit{Physical Review A}, vol. 57(2), pp. 830-839, 1998.

\bibitem{Fannes} M. Fannes,
``A Continuity Property of the Entropy Density for Spin Lattice Systems'',
\textit{Communications in Mathematical Physics}, vol. 31, pp. 291-294, 1973.

\bibitem{Fuchs-vandeGraaf}
C. A. Fuchs and J. van de Graaf,
``Cryptographic Distinguishability Measures for Quantum-Mechanical States'',
\textit{IEEE Transactions on Information Theory}, vol. 45(4), pp. 1216-1227, 1999.

\bibitem{GW03}
M. Gregoratti and R. F. Werner, ``Quantum lost and found'', 
\textit{Journal of Modern Optics}, vol. 50(6-7), pp. 915-933, 2003.

\bibitem{GW04}
M. Gregoratti and R. F. Werner, 
``On quantum error-correction by classical feedback in discrete time'', 
\textit{Journal of Mathematical Physics}, vol. 45(7), pp. 2600-2612, 2004.


\bibitem{HVC02}
K. Hammerer, G. Vidal, and J. I. Cirac,
``Characterization of nonlocal gates",
\textit{Physical Review A}, vol. 66(6), pp. 062321, 2002.

\bibitem{HK05}
P. Hayden and C. King, 
``Correcting quantum channels by measuring the environment'', 
\textit{Quantum Information and Computation}, vol. 5(2), pp. 156-160, 2005.

\bibitem{HW97}
S. Hill and W. K. Wootters, 
``Entanglement of a Pair of Quantum Bits",
\textit{Physical Review Letters}, vol. 78(26), pp. 5022-5025, 1997.

\bibitem{HDW08}
M.-H. Hsieh, I. Devetak, and A. Winter, 
``Entanglement-Assisted Capacity of Quantum Multiple-Access Channels'',
\textit{IEEE Transactions on Information Theory}, vol. 54(7), pp. 3078-3090, 2008.

\bibitem{KM14} 
S. Karumanchi, S. Mancini, and A. Winter, in preparation, 2014.

\bibitem{KR01}
C. King and M. B. Ruskai, 
``Minimal Entropy of States Emerging from Noisy Quantum Channels'', 
\textit{IEEE Transactions on Information Theory}, vol. 47, pp. 192-209, 2001.

\bibitem{KC01}
B. Kraus and J. I. Cirac, 
``Optimal creation of entanglement using a two-qubit gate'', 
\textit{Physical Review A}, vol. 63, pp. 062309, 2001.

\bibitem{LGZ13} 
Y. Liu, Y. Guo, and D. L. Zhou,
``Optimal transfer of an unknown state via a bipartite quantum operation'',
\emph{Europhysics Letters}, vol. 102(5), pp. 50003, 2013.

\bibitem{LPM09}
C. Lupo, O. Pilyavets, and S. Mancini,
`` On the capacities of a lossy bosonic channel with correlated noise'',
\textit{New Journal of Physics}, vol.~11, pp.~063023, 2009

\bibitem{LS09}
D. Leung and G. Smith, ``Continuity of Quantum Channel Capacities'', 
\textit{Communications in Mathematical Physics}, vol. 292, pp. 201-215, 2009.

\vfill\pagebreak

\bibitem{Lloyd96}
S. Lloyd, ``Capacity of the noisy quantum channel'', 
\textit{Physical Review A}, vol. 55(3), pp. 1613-1622, 1996.

\bibitem{MCM11}
L. Memarzadeh, C. Cafaro, and S. Mancini,
``Quantum information reclaiming after amplitude damping'',
\textit{Journal of Physics A: Mathematical and Theoretical},
vol. 44, pp. 045304, 2011.

\bibitem{MMM11}
L. Memarzadeh, C. Macchiavello, and S. Mancini,
``Recovering quantum information through partial access to the environment'',
\textit{New Journal of Physics}, vol. 13, pp. 103031, 2011. 

\bibitem{ML09}
G. O. Myhr and N. L\"utkenhaus,
``Spectrum conditions for symmetric extendible states'', 
\textit{Physical Review A}, vol. 79, pp. 062307, 2009. 

\bibitem{RSW02}
M. B. Ruskai, S. Szarek, and E. Werner, 
``An Analysis of Completely-Positive Trace-Preserving Maps on 2x2 Matrices'', 
\textit{Linear Algebra and Its Applications}, vol. 347, pp. 159-187, 2002.

\bibitem{Schu96} B. Schumacher,
``Sending entanglement through noisy quantum channels'',
\textit{Physical Review A}, vol. 54(4), pp. 2614-2628, 1996.

\bibitem{NS96}
B. Schumacher and M. A. Nielsen,
``Quantum data processing and error correction'',
\textit{Physical Review A}, vol. 54(4), pp. 2629-2635, 1996. 

\bibitem{SS96}
P. W. Shor and J. A. Smolin, 
``Quantum Error-Correcting Codes Need Not Completely Reveal the Error Syndrome'', 
arXiv:quant-ph/9604006, 1996.

\bibitem{Shor00}
P. W. Shor, ``The quantum channel capacity and coherent information'', 
MSRI seminar, November 2002.

\bibitem{SY08}
G. Smith and J. Yard,
``Quantum Communication with Zero-Capacity Channels'',
\textit{Science}, vol. 321(5897), pp. 1812-1815, 2008.

\bibitem{SVW05}
J. A. Smolin, F. Verstraete, and A. Winter, 
``Entanglement of assistance and multipartite state distillation'', 
\textit{Physical Review A}, vol. 72, pp. 052317, 2005. 

\bibitem{Wilde11}
M. M. Wilde, \emph{Quantum Information Theory}, Cambridge University Press, Cambridge 2013;
``From Classical to Quantum Shannon Theory'', arXiv[quant-ph]:1106.1445.

\bibitem{W01}
A. Winter, ``The capacity of the quantum multiple access channel'',
\textit{IEEE Transactions on Information Theory}, vol. 47(7), pp. 3059-3065, 2001.

\bibitem{Winter07}
A. Winter, ``On Environment-Assisted Capacities of Quantum Channels'', 
\textit{Markov Processes and Related Fields}, vol. 13(1-2), pp. 297-314, 2007. 

\bibitem{WPG07}
M. M. Wolf and D. Perez-Garc\'ia, 
``Quantum capacities of channels with small environment'', 
\textit{Physical Review A}, vol. 75, pp. 012303, 2007. 

\bibitem{YHD08}
J. Yard, P. Hayden, and I. Devetak,
``Capacity Theorems for Quantum Multiple Access Channels: 
Classical-Quantum and Quantum-Quantum Capacity Regions'',
\textit{IEEE Transactions on Information Theory}, vol. 54(7), pp. 3091-3113, 2008.





\end{thebibliography}
\end{document}